\theoremstyle{plain}
\newtheorem{theorem}{Theorem}[section]
\newtheorem{lemma}[theorem]{Lemma}
\theoremstyle{definition}
\newtheorem{assumption}{Assumption}[section]
\newtheorem{definition}[theorem]{Definition}
\newtheorem{example}[theorem]{Example}
\newtheorem{remark}[theorem]{Remark}
\newcommand{\exclude}[1]{}
\newcommand{\E}{{\mathbb{E}}}
\newcommand{\N}{{\mathbb{N}}}
\renewcommand{\P}{{\mathbb{P}}}
\newcommand{\R}{{\mathbb{R}}}
\newcommand{\B}{{\mathbb{B}}}
\definecolor{darkgreen}{rgb}{0,0.5,0}
\definecolor{rot}{rgb}{0.75,0,0.25}
\definecolor{magenta}{rgb}{0.75,0,0.25}
\definecolor{violet}{rgb}{0.25,0,0.75}
\definecolor{softgreen}{rgb}{0,0.7,0.3}
\newcommand{\re}{\operatorname{Re}}
\newcommand{\im}{\operatorname{Im}}
\renewcommand{\P}{{\mathbb P}}
\newcommand{\cA}{{\cal A}}
\newcommand{\cC}{{\cal C}}
\newcommand{\cF}{{\cal F}}
\newcommand{\cH}{{\cal H}}
\newcommand{\cK}{{\cal K}}
\newcommand{\cM}{{\cal M}}
\newcommand{\cP}{{\cal P}}
\newcommand{\cR}{{\cal R}}
\newcommand{\cU}{{\cal U}}
\newcommand{\intens}{\lambda}
\newcommand{\state}{\R_0^+ \times \R}
\newcommand{\stateT}{[0,T] \times \R_0^+ \times \R}
\title{Utility Indifference Pricing of Insurance Catastrophe Derivatives}
\author{Andreas Eichler, Gunther Leobacher, Michaela Sz\"olgyenyi}
\begin{document}

\date{Preprint, May 2017}

\maketitle


\begin{abstract}
We propose a model for an insurance loss index 
and the claims process of a single insurance company holding a fraction
of the total number of contracts
that captures both ordinary losses and losses due to catastrophes.
In this model we price a catastrophe derivative by the method of utility indifference pricing.
The associated stochastic optimization problem is treated by techniques for piecewise
deterministic Markov processes.
A numerical study illustrates our results.\\
 
\noindent Keywords: insurance mathematics, catastrophe derivatives, utility indifference pricing, modeling catastrophe losses, piecewise deterministic Markov process\\
Mathematics Subject Classification (2010): 91G20, 91B70, 91B16, 93E20, 60J75\\
JEL Classification: G13, G22
\end{abstract}


\section{Introduction}
\label{sec:introduction}

Costly natural catastrophes in the recent past
(hurricane Andrew in 1992,
hurricane Katrina in 2005,
the earthquake and tsunami in Japan 2011 resulting in the nuclear disaster at Fukushima,
floods in Thailand 2011)
all caused severe stress to the (re-)insurance industry.
However, these losses are still
small relative to losses of the US stock and bond markets. 
Therefore securitization (i.e.~transferring part of the risk to the financial market) is an efficient alternative to reinsuring catastrophe (CAT) losses, cf.~\cite{clp}.

Contracts of this kind are insurance-linked derivatives.\footnote{Details on currently listed insurance-linked derivatives can be found at \url{www.artemis.bm/deal_directory}.}
They are usually written on insurance industry
catastrophe loss indices, insurer-specific catastrophe losses, or parametric
indices based on the physical characteristics of catastrophe events. 
We focus on the first kind of products; they involve more basis
risk, but are less exposed to moral hazard than the others, cf.~\cite{cum}.

Derivatives written on insurance industry catastrophe loss indices were first issued in 1992 by the Chicago Board of Trade; these were futures and later also call- and put spread options written on aggregate CAT-loss indices, cf.~\cite{cum}.

A call spread option is a combination of a call option long and a call option short with a higher strike.
Another  popular type of catastrophe derivative is the CAT bond.
This is a classical bond combined with an option that is triggered by a (predefined) catastrophe event.
Note that the buyer of the bond thereby sells the embedded option.
The issuer is typically a (re-)insurance company that wants to reinsure parts of its risk exposure on the financial market.
In return the investor receives a coupon.

CAT derivatives are interesting for investors who seek to diversify their risk, since they are largely uncorrelated with classical financial instruments.\\

The challenges in pricing CAT derivatives are that the underlying index is not a traded asset,
that they are not liquidly traded themselves and, maybe most of all, the modeling of catastrophe events.\\

In the following we review the existing literature. For a more detailed literature overview we refer to \citet{muermann2004}.

\citet{Geman} study European vanilla call options written on an insurance loss index, which is
modeled by a jump-diffusion.
\citet{Cox} model the aggregate loss of an insurance company by a Poisson
process with constant arrival rate of catastrophe events
and derive a pricing formula for CAT-puts.
\citet{Jaimungal} model the aggregate loss by a
compound Poisson process to describe the dynamic losses more accurately.
\citet{muermann2008} derives the market price of insurance risk from CAT derivative prices in a compound Poisson model.
\citet{LeoNgare} use the method of utility indifference pricing to price CAT derivatives
written on an insurance loss index modeled by a compound Poisson process.

For catastrophe events, the assumption that the resulting claims
occur at jump times of a Poisson process as adopted by most previous studies is
not beyond justifiable critique.
A generalization was proposed in \citet{embrechts1997}, who model an insurance loss index by a doubly stochastic Poisson process (Cox process), i.e. the arrival rate of claims is a stochastic process itself; they price CAT futures in this model. \citet{Lin} also model the arrival of CAT events by a doubly stochastic Poisson process.
See also \citet{fuji} for no-arbitrage pricing of  CAT bonds in this context.
\citet{dassios2003} study the valuation of CAT derivatives by risk neutral valuation, where the underlying is modeled as a Cox process with shot noise intensity.\\

In this paper we introduce a novel model for an insurance loss index and for a single insurance portfolio that captures ordinary insurance losses as well as catastrophe losses.
We model the ordinary claims in the loss index by a compound Poisson process with constant intensity and we model the arrival of catastrophes by a Poisson process with constant intensity, where a jump triggers another stochastic variable that determines the number of claims in case of a catastrophe.

The claims process of a single insurance company holding a fraction
of the total number of contracts is then a \emph{dynamic thinning} of the process describing the index.
Our model has the advantage that the jump height distribution does not need to capture both many small claims and outliers caused by catastrophes, but these outliers are split into many smaller claims.
The dynamic thinning is reached in a very convenient way (by drawing from a uniform distribution on $[0,1]$) that we believe to be applicable in many other situations.\\

Using this model we present a pricing mechanism for CAT derivatives (like CAT spread options).
Since the insurance loss index is not a tradable asset, and since the market for CAT derivatives is not liquid, risk neutral valuation is not applicable. Instead we use the method of utility indifference pricing.
For this we need a hedging mechanism, which will be an active management of the risk portfolio.
The pricing method requires solving an associated stochastic optimization problem.

Our paper extends \cite{LeoNgare} by a more realistic modeling approach for the insurance loss index
and also for the thinning. In our paper a catastrophe event may partly hit the considered insurance company, whereas in \cite{LeoNgare} a catastrophe event always only affects one company.
Their model for the claims process of a single insurance company is a thinning (a change of the intensity) of the Poisson process driving the number of claims, while ours is a dynamic thinning of the claims for each event and thus has a different distribution of the jumps.

The model presented here is technically harder to handle; we provide the mathematical toolkit in this paper.
Using this new model instead of a simpler one is justified by our numerical results, which show that the new model has a significant impact on the price of a CAT derivative as it reflects catastrophes more accurately.

We also introduce a way to compute the utility indifference price of the derivative by Fourier techniques.
This method also allows to compute the residual risk and the profit-loss distribution and therefore to evaluate coherent risk measures.\\

The paper is organized as follows.
In Section \ref{sec:modeling} we model the insurance loss index 
and the claims process of a single insurance company.
The state process based on which the CAT derivative is priced, is identified as a
piecewise deterministic Markov process (PDMP), see \cite{davis1993, BR2011}.
In Section \ref{sec:pricing} we recall
the general concept of utility indifference pricing and we  
solve the associated stochastic optimization problem.
In Section \ref{sec:numerics} we show how the utility indifference price and also quantities relevant for risk management can be computed efficiently, and we  present a numerical study.


\section{The model}
\label{sec:modeling}

Let $(\Omega,\cF,\P)$ be a probability space carrying all stochastic variables appearing below.

Suppose we have a global claims process $C$, which keeps
track of all property insurance claims in a given country and we consider
an insurance company in the same country, so that the index will contain  the
losses of that particular insurance company among others.

The portfolio income rate consisting of the premium revenues from the risk portfolio is given by a continuous function $q$ of the company's market share $\xi \in [0,1]$.
The function $q$ is not necessarily linear in $\xi$, since demand for insurance might depend on the premium the company charges.
The wealth process of the insurance company can 
be controlled by managing the insured portfolio, i.e.~by controlling the market share $\xi$.
This allows for optimizing the management strategy for maximizing utility from terminal wealth.
Therefore, we can apply the method of utility indifference pricing for the valuation of CAT-derivatives.\\

The global claims process is given by
\begin{align}\label{eq:claimsprocess0}
C_t=\sum_{i=1}^{N^1_t} Y_{i,1} +\sum_{i=1}^{N^2_t} Z_i\,, \,\,\text{ where }\,\,
Z_i=\sum_{j=2}^{\tilde A_i}Y_{i,j}\,,
\end{align}
and where $N^1=(N^1_t)_{t\ge0}$ and $N^2=(N^2_t)_{t\ge0}$ are independent Poisson processes with intensities
$\lambda^1,\lambda^2$ and jump times $(\tau^1_{i})_{i\ge 1},(\tau^2_{i})_{i\ge 1}$.
The jump heights $(Y_{i,j})_{i,j\ge 1}$ are iid random variables representing the damage of, e.g., single houses.
The random variables $(\tilde A_i)_{i\ge 1}$, $\tilde A_i \in \N\backslash\{1\}$, describe the number of claims in case of a catastrophe.
The process $N^1$ describes the occurrence of regular claims
whereas a jump of $N^2$ indicates an accumulation of $\tilde A_1$ claims
due to a catastrophe event.

If the insurance company holds the $\xi$-th part of the whole risk, it is exposed to the $\xi$-th part of
the claims.
We model this as
\begin{align}\label{eq:claimsprocess}
C_t^{\xi}=\sum_{i=1}^{N^1_t} Y_{i,1} 1_{\{U_{i,1}\leq\xi_{\tau^1_{i}}\}}+\sum_{i=1}^{N^2_t} Z_i^{\xi}\,, \,\,\text{ where }\,\,
Z_i^{\xi}=\sum_{j=2}^{\tilde A_i}Y_{i,j} 1_{\{U_{i,j}\leq\xi_{\tau^2_{i}}\}}\,.
\end{align}
The random variables $(U_{i,j})_{i,j\ge 1}$ are iid and
$U_{1,1}\sim\cU([0,1])$; they determine whether the company is affected by the
corresponding claim or not.
For fixed $\xi$ this is a thinning of the original process, cf.~\cite[Section 3.12.1]{olofsson}.

We assume independence of $N^1,N^2,(Y_{i,j})_{i,j\ge 1},(\tilde A_i)_{i\ge 1},(U_{i,j})_{i,j\ge 1}$.\\ 

It is possible to write \eqref{eq:claimsprocess0} as a single sum by adapting the jump intensity and the
distribution of the $\tilde A_i$, which we will do to ease the notation in the following.
Note that the jump height distribution does not need to be adapted so that we do not loose the favourable properties for modeling catastrophe events.
Let $L=(L_t)_{t\ge 0}$ be a Poisson process with intensity $\lambda=\lambda^1+\lambda^2$ and jump times $(\tau_{i})_{i\ge 1}$ and let the number of claims per jump of $L$ be denoted by $(A_i)_{i\ge 1}$
with
\begin{align}\label{eq:distrA}
\P(A_1=k)=
\begin{cases}
\frac{\lambda^1}{\lambda^1+\lambda^2} & k=1\,, \\
\frac{\lambda^2}{\lambda^1+\lambda^2}\P(\tilde A_1=k) & k\ge 2 \,.
\end{cases}
\end{align}
We can write the insurance loss index as
\begin{align}\label{eq:claimsprocess0-new}
C_t=\sum_{i=1}^{L_t}Z_i\,, \,\,\text{ where }\,\,Z_i=\sum_{j=1}^{A_i}Y_{i,j}\,.
\end{align}
Both $\lambda$ and the distribution of $A_1$ are chosen such that
\eqref{eq:claimsprocess0} and \eqref{eq:claimsprocess0-new} are equivalent.

The claims process of the insurance company holding the $\xi$-th part of the risk becomes
\begin{align}\label{eq:claimsprocess-new}
C_t^{\xi}=\sum_{i=1}^{L_t}Z_i^{\xi}\,, \,\,\text{ where }\,\,
Z_i^{\xi}=\sum_{j=1}^{A_i}Y_{i,j} 1_{\{U_{i,j}\leq\xi_{\tau_{i}}\}}\,.
\end{align}   

Denote by $a_k:=\P(A_1=k)$.
For all $s\in [0,1]$ the generating function of $A_1$ is given by $G_{A_1}(s):=\sum_{k=1}^\infty a_{k} s^k$, where
\[
G_{A_1}(s)=\frac{\lambda^1}{\lambda^1+\lambda^2}s+\frac{\lambda^2}{\lambda^1+\lambda^2}G_{\tilde A_1}(s)\,.
\]

\begin{assumption}\label{ass:konv-rad}
We assume that
\begin{itemize}\setlength\itemsep{0em}
\item $\E(e^{\eta Y_{1,1}})<\infty$;
\item $\limsup_{k\to\infty}a_{k+1}/a_k<1/\E(e^{\eta Y_{1,1}})$.
\end{itemize}
\end{assumption}

Assumption \ref{ass:konv-rad} implies that the convergence radius of the generating function $G_{A_1}$ is greater than $\E(e^{\eta Y_{1,1}})$ and hence
\begin{align*}
\E&\left(\E(e^{\eta Y_{1,1}})^{A_1}\right)
 =\sum_{k=1}^\infty a_k \E\left(e^{\eta Y_{1,1}} \right)^k
 = G_{A_1}\left( \E\left( e^{\eta Y_{1,1}}\right)\right)
 <\infty\,.
\end{align*}
Note that if $\E\left(e^{\eta Y_{1,1}A_1}\right)<\infty$, also $\E\left(\E(e^{\eta Y_{1,1}})^{A_1}\right)<\infty$ by Jensen's inequality.

In contrast to a model where the claims process is a simple compound Poisson process,
here assuming the existence of exponential moments of the claim size distribution is not a great restriction,
since we model catastrophes as an accumulation of small claims rather than one big claim.\\

The dynamics of the wealth process $X^\xi=(X^\xi_t)_{t\ge0}$ of the insurance company with initial wealth $x$ is given by:
\begin{align}\label{eq:Xdyn}
X^\xi_t
&:= x+\int_0^t q(\xi_s) ds-\sum_{i=1}^{L_t}\sum_{j=0}^{A_i}Y_{i,j} 1_{\{U_{i,j}\leq \xi_{\tau_{i}}\}}
= x+\int_0^t q(\xi_s) ds-C^{\xi}_t\,.
\end{align}

\paragraph{PDMP characterization}

The two-dimensional process $(C,X^\xi)$ is a
PDMP in the sense of \cite{davis1993}. We also refer to \cite[Chapter 8]{BR2011} or \cite{BR2010} for a presentation of the theory.
Our PDMP has the following characteristics:

\vspace{-.5em}
\begin{itemize}\setlength\itemsep{0em}
\item state space $\state$;
\item control space $[0,1]$;
\item deterministic flow $d(C_t,X^\xi_t)=(0, q(\xi_t)) dt$
between jumps;
\item jump intensity $\intens$;
\item jump kernel $Q$, 
\begin{align*}
Q(B|(c,x),\xi)=
\sum_{k=0}^\infty a_k Q_k(B|(c,x),\xi)\,,
\end{align*}
where
\begin{align*}
Q_k(B|(c,x),\xi)=
\sum_{\cK \subseteq \{1,\dots,k\}} \xi ^{|\cK|} \left(1-\xi\right)^{k-|\cK|}
\P\left( \left( \sum_{j=1}^k Y_{1,j},\sum_{j\in \cK} Y_{1,j} \right) \in B-(c,x) \right)\,,
\end{align*}
and where we use the notation $B-(c,x)=\{(b_1-c,b_2-x):(b_1,b_2)\in B\}$;
\item zero running reward rate;
\item zero discount rate. 
\end{itemize}

Denoting by $\tau$ the time of a jump of the PDMP and by $(C_\tau,X_\tau)$ the state immediately after that jump,
we define the set of bounded Markov controls $\cM_b$ as the set of all measurable functions assigning to given input data $(\tau,C_\tau,X_\tau)$ a control until the next jump, i.e.
\[
\stateT\longrightarrow \{\zeta:\R_0^+\longrightarrow [0,1], \zeta \text{ measurable}\}\,.
\]

\section{Utility indifference pricing}
\label{sec:pricing}

The method of utility indifference pricing for the valuation of derivatives
in incomplete markets has been introduced in \cite{Hodges}. It relies on the
fact that even if the derivative cannot be replicated, it may still
be the case that much of its variation can be hedged.

In \cite{Egami} utility indifference pricing is used to price structured catastrophe bonds.
However, there is a difference in modeling the hedging possibility.
In our setup this is done via managing the insured portfolio.  The main idea
is that the loss in the portfolio of a single insurance company is necessarily
correlated with the insurance loss index. The introduction of the derivative has
therefore an influence on the pricing policy of the insurance company.

We will first explain the notion of utility indifference pricing and then apply it to our problem.\\

Assume the investor has a utility function $u$ and
initial wealth $x$. Define
$
J(x,\ell):=\sup_{X_T}\E(u(X_T+ \ell \psi))$, where the supremum is taken over 
all possible wealths $X_T$ that can be generated 
from $x$. The random variable $\psi$ is the payment from a European claim with
expiry $T$, and $\ell$ is the number of claims that are bought.\\

The \emph{utility indifference bid price} $p^b(\ell)$ is the price at which the
investor has the same utility whether she pays nothing and does not receive the claim
$\psi$, or she pays $p^b(\ell)$ now and receives $\ell$ units of the claim $\psi$ at time $T$.
Therefore, $p^b(\ell)$ is the largest amount of money the investor is willing to pay for buying $\ell$ units of the claim $\psi$; it solves
$
J(x-p^b(\ell),\ell)=J(x,0)
$.

The {\em utility indifference ask price} 
$p^a(\ell)$ is the smallest amount of money the investor is willing to accept for selling $\ell$ units of the claim $\psi$; it solves
$
J(x+p^a(\ell),-\ell)=J(x,0)
$.

The two prices are related via $p^b(\ell)=-p^a(-\ell)$.
With this in mind we can define the {\em utility indifference price}
$p:=p^b(1)$.

\begin{assumption}
\begin{itemize}\setlength\itemsep{0em}
\item The insurance company has exponential utility $u(x)=-\exp(-\eta x)$, $\eta>0$.
\item  $X_T$ is of the form $x+\Gamma^\xi_T$ for some control $\xi$ and
$\Gamma^\xi_T$ does not depend on the initial wealth $x$.
\end{itemize}
\end{assumption}

In that case 
\begin{align}
p=-\frac{1}{\eta}\left(\log\left(\inf_{\xi\in\cM_b}\E(\exp(-\eta (\Gamma^{\xi}_T+\psi)))\right)
-\log\left(\inf_{\xi\in\cM_b}\E(\exp(-\eta \Gamma^{\xi}_T))\right)\right)\label{eq:buyers-price}
\end{align}
(provided that the arguments in the logarithms are finite), and hence $p$ does not depend
on the initial wealth $x$.

Note that exponential utility is a natural choice for insurance companies as often such a utility function is used to calculate insurance premia. As an example where exponential utility is used in a stochastic optimal control framework in an insurance context, see \cite{fernandez2008}.

\subsection{The stochastic optimization problem}\label{sec:optimal-strategy}

We apply the concept of utility indifference pricing 
to the model presented in Section \ref{sec:modeling}.
Our aim is to price a derivative written on the total claims process $C$ with payoff
$\psi(C_T)$, where $\psi$ is a continuous and bounded function on $\R_0^+$.

\begin{example}
We are specifically interested in CAT (spread) options, i.e.
$$
\psi(c)=\max(0,\min(c-K,L-K))
$$
with cap $L$ and strike $0<K<L$. The option is in the money, if $c$ exceeds $K$, and the payoff is bounded by $L-K$.

Note that the main task in pricing CAT bonds also lies in pricing the embedded spread option,
since for exponential utility the price of a CAT bond is the sum of a spread option price and a bond price.
\end{example}

We maximize the expected utility from terminal wealth.
The corresponding value function is defined by
\begin{align}\label{eq:value_f}
V(t,c,x):=\sup_{\xi\in \cM_b}\E(u(X^\xi_T+\psi(C_T))|C_t=c,X_t=x)\,.
\end{align}

Since $\psi$ is bounded we have that
for $\xi\equiv 0$,
$\E(u(X^\xi_T+\psi{(C_T)})|C_t=c,X^\xi_t=x)>-\infty$ for all $t,c,x$,
and hence $V(t,c,x)>-\infty$.
$V$ is bounded from above since $u$ is bounded. 
Therefore, $V$ is well-defined.\\

For $v:\stateT\longrightarrow\R$ bounded and measurable the generator of the jump process is defined by
\begin{equation*}
\begin{alignedat}{2}
\cA^\xi v(t,c,x)&= \intens  \sum_{k=1}^\infty a_k  \sum_{\cK \subseteq \{1,\dots,k\}}\xi ^{|\cK|} \left(1-\xi\right)^{k-|\cK|}
 \E\left(v \left( t,c+\sum_{j=1}^k Y_{1,j},x-\sum_{j\in \cK} Y_{1,j} \right) -v(t,c,x) \right)\\
&= \intens  \E\left(v\left(t,c+Z_1,x-Z_1^{\xi}\right)-v(t,c,x)\right)
\,.&
\end{alignedat}
\end{equation*}

The Hamilton-Jacobi-Bellman (HJB) equation corresponding to optimization problem \eqref{eq:value_f} is
\begin{equation}\label{eq:HJB}
\begin{aligned}
v_t(t,c,x)+\sup_{\xi\in[0,1]}\left(q(\xi)v_x(t,c,x)+\cA^\xi v(t,c,x)\right)&=0\,,\\
v(T,c,x)&=u(x+\psi(c))\,.
\end{aligned}
\end{equation}

We make the ansatz $v(t,c,x)=u(x)\exp(-\eta w(t,c))$ to obtain a backward equation which is independent of the initial wealth $x$. This yields
\begin{align*}
v_t(t,c,x)&=-\eta w_t(t,c)v(t,c,x)\,,\\
v_x(t,c,x)&=-\eta v(t,c,x)\,,\\
\tilde{v}(t,c,x,\xi)
&= \intens \E\left(u\left(x-Z_1^{\xi}\right)\exp(-\eta w(t,c+Z_1))-u(x)\exp(-\eta w(t,c))\right)\\
&=v(t,c,x) \intens \E\left(\exp\left(-\eta\left(w(t,c+Z_1)-w(t,c)-Z_1^{\xi}\right)\right)-1\right)\\
&=v(t,c,x) \intens \left(\exp(\eta w(t,c))\E\left(\exp(-\eta w(t,c+Z_1))\exp\left(\eta Z_1^{\xi}\right)\right)-1\right)\,.
\end{align*}
Defining
\begin{align}\label{eq:gen-w}
\tilde{\cA}^\xi w(t,c)&:=-\frac{1}{\eta} \intens \left(\exp(\eta w(t,c))\E\left(\exp(-\eta w(t,c+Z_1))\exp\left(\eta Z_1^{\xi}\right)\right)-1\right)
\end{align}
and using that $v$ is negative,
we obtain the backward equation for $w$:
\begin{equation}\label{eq:HJBw}
\begin{aligned}
w_t(t,c)+\sup_{\xi\in[0,1]}\left(q(\xi)+\tilde{\cA}^\xi w(t,c)\right)&=0\,,\\
w(T,c)&=\psi(c)\,.
\end{aligned}
\end{equation} 

\begin{lemma}\label{lem:w-bounded}
Let $W$ be such that $V(t,c,x)=u(x)\exp(-\eta W(t,c,x))$.
Then $W$ is bounded by $\|q\|_{\infty}  T+\|\psi\|_\infty$.
\end{lemma}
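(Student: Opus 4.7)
My first step is to rewrite $W$ as an explicit value function, bypassing the HJB equation. From $u(y)=-\exp(-\eta y)$ and the observation (already used for \eqref{eq:buyers-price}) that every admissible terminal wealth decomposes as $X^\xi_T=x+\Gamma^\xi_{t,T}$ with $\Gamma^\xi_{t,T}:=\int_t^T q(\xi_s)\,ds-(C^\xi_T-C^\xi_t)$ independent of $x$, the definition \eqref{eq:value_f} becomes
\[
V(t,c,x)=-\exp(-\eta x)\,\inf_{\xi\in\cM_b}\E\bigl(\exp(-\eta(\Gamma^\xi_{t,T}+\psi(C_T)))\bigm|C_t=c\bigr),
\]
where the sign flip turns the sup into an inf because $-\exp(-\eta x)<0$. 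Comparing with $V(t,c,x)=u(x)\exp(-\eta W(t,c,x))$ yields
\[
W(t,c,x)=-\frac{1}{\eta}\log\inf_{\xi\in\cM_b}\E\bigl(\exp(-\eta(\Gamma^\xi_{t,T}+\psi(C_T)))\bigm|C_t=c\bigr),
\]
which incidentally shows that $W$ does not depend on $x$. It remains to sandwich the infimum inside the logarithm between $\exp(-\eta(\|q\|_\infty T+\|\psi\|_\infty))$ and $\exp(\eta(\|q\|_\infty T+\|\psi\|_\infty))$.

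\textbf{Upper bound on $W$.} For \emph{every} admissible $\xi$, the thinned claims process is increasing, so $C^\xi_T-C^\xi_t\geq 0$ almost surely; since $\eta>0$ this gives $-\eta\Gamma^\xi_{t,T}=-\eta\int_t^T q(\xi_s)\,ds+\eta(C^\xi_T-C^\xi_t)\geq-\eta\|q\|_\infty T$ pointwise. Combined with $-\eta\psi(C_T)\geq-\eta\|\psi\|_\infty$, the integrand is uniformly bounded below by $\exp(-\eta(\|q\|_\infty T+\|\psi\|_\infty))$. This bound is inherited by the conditional expectation and by the infimum, and applying $-\tfrac{1}{\eta}\log$ reverses the inequality to give $W(t,c,x)\leq\|q\|_\infty T+\|\psi\|_\infty$.

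\textbf{Lower bound on $W$.} Here, rather than argue for every $\xi$, I evaluate the infimum on the particular feasible control $\xi\equiv 0\in\cM_b$. The thinning representation \eqref{eq:claimsprocess-new} then forces $C^0_T-C^0_t=0$, so $\Gamma^0_{t,T}=q(0)(T-t)$ is deterministic. Using $|q(0)|\leq\|q\|_\infty$, $T-t\leq T$ and $\psi\geq-\|\psi\|_\infty$,
\[
\E\bigl(\exp(-\eta(\Gamma^0_{t,T}+\psi(C_T)))\bigm|C_t=c\bigr)\leq\exp(\eta(\|q\|_\infty T+\|\psi\|_\infty)),
\]
which is an upper bound on the infimum. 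Applying $-\tfrac{1}{\eta}\log$ gives $W(t,c,x)\geq-(\|q\|_\infty T+\|\psi\|_\infty)$, finishing the proof.

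I do not foresee any real obstacle; the only conceptual point is that the two inequalities must be obtained by opposite mechanisms. The upper bound relies on a worst-case pointwise estimate valid for every admissible $\xi$, while the lower bound is reached by exhibiting one convenient feasible strategy ($\xi\equiv 0$) that kills the catastrophe term entirely via the thinning and hence sidesteps the (potentially heavy-tailed) distribution of $Z_1$.
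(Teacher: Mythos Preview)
Your proof is correct and follows essentially the same route as the paper: you rewrite $W$ as $-\tfrac{1}{\eta}\log$ of an infimum of expectations, obtain the lower bound on the infimum (upper bound on $W$) via the pointwise estimate $C^\xi_T-C^\xi_t\ge 0$ valid for every admissible $\xi$, and obtain the upper bound on the infimum (lower bound on $W$) by evaluating at the feasible control $\xi\equiv 0$. The only cosmetic difference is that the paper keeps the sharper factor $T-t$ in the intermediate estimates before relaxing to $T$, whereas you pass to $T$ immediately.
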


\begin{proof}
We have $V(t,c,x)=u(x)\exp(-\eta W(t,c,x))$, i.e.
\begin{align*}
W(t,c,x)&=-\frac{1}{\eta}\log\left(\frac{V(t,c,x)}{u(x)}\right)
=-\frac{1}{\eta}\log\left(\inf_{\xi \in\cM_b}\E\left(\frac{u(X^\xi_T+\psi(C_T))}{u(x)}\Big|C_t=c,X^\xi_t=x\right)\right)\\
&=-\frac{1}{\eta}\log\left(\inf_{\xi \in\cM_b}\E\left(\exp(-\eta(X^\xi_T-x+\psi(C_T)))\Big|C_t=c,X^\xi_t=x\right)\right)\,.
\end{align*}
Denote by $X^0$ the process $X^\xi$ with $\xi\equiv 0$. Then
\begin{align*}
\inf_{\xi \in\cM_b}&\E\left(\exp(-\eta(X^\xi_T-x+\psi(C_T)))\Big|C_t=c,X^\xi_t=x\right)
\le \E\left(\exp(-\eta(X^0_T-x+\psi(C_T)))\Big|C_t=c,X^0_t=x\right)\\
&= \E\left(\exp(-\eta(q(0)(T-t)+\psi(C_T)))\Big|C_t=c,X^0_t=x\right)
\le \exp(\eta(\|q\|_{\infty} (T-t)+\|\psi\|_\infty))\,,
\end{align*}
and
\begin{align*}
\inf_{\xi \in\cM_b}&\E\left(\exp(-\eta(X^\xi_T-x+\psi(C_T)))\Big|C_t=c,X^\xi_t=x\right)\\
&\ge \inf_{\xi \in\cM_b}\E\left(\exp(-\eta(\|q\|_{\infty} (T-t)-C^\xi_T+\psi(C_T)))\Big|C_t=c\right)
\ge \exp(-\eta(\|q\|_{\infty} (T-t)+\|\psi\|_\infty))\,.
\end{align*}
Thus $|W(t,c,x)|\le \|q\|_{\infty}  T+\|\psi\|_\infty$.
\end{proof}


\subsection{Verification result}
\label{subsec:Verification}

We show that the solution of the HJB equation \eqref{eq:HJB} solves the optimization problem \eqref{eq:value_f}. 
For this we apply results from stochastic control theory for PDMPs;
more precisely, a slight variation of the verification theorem 
\cite[Theorem 8.2.8]{BR2011}.
For this we recall two definitions from \cite{BR2010}. 

\begin{definition}\label{def:bounding}
A measurable function $b:\state\longrightarrow \R_0^+$ is called a
{\em bounding function} for our piecewise deterministic Markov decision model,
if there exist constants $c_u,c_Q,c_{\text{flow}}\ge 0$ such that
for all $(c,x)\in \state$
\begin{enumerate}
\renewcommand{\theenumi}{\roman{enumi}}
\item $|u(x+\psi(c))|\le c_u b(c,x)$;
\item $\int b(\tilde c,\tilde x)Q(d\tilde c\times d\tilde x|(c,x),\xi)\le c_Q b(c,x)$ for all $(c,x)\in \state$, $\xi\in [0,1]$;
\item $b(c,x+\int_0^T \int_0^1 q(\xi)r_s(d\xi)ds)\le c_{\text{flow}}b(c,x)$ for all $r\in \cR$.
\end{enumerate}
Here $\cR$ is the space of {\em relaxed policies}, i.e.~of measurable maps
$\R_0^+\longrightarrow  \cP([0,1])$, where $\cP([0,1])$ is the space
of all probability measures on the Borel $\sigma$-algebra on $[0,1]$.
\end{definition}

\begin{definition}\label{def:gamma}
Let $b:\R_0^+\times \R\longrightarrow \R_0^+$ be a bounding function and $\gamma>0$ fixed.
Define the Banach space
$
\B_{b,\gamma}:=\{v:\stateT\longrightarrow \R_0^+: v \text{ measurable and }\|v\|_b<\infty\}\,,
$
with the norm
\[
\|v\|_{b,\gamma}:=\underset{(t,c,x)}{\mathrm{ess\;sup}}\frac{|v(t,c,x)|}{\exp(\gamma(T-t))b(c,x)}\,, \,\,\text{ where }\,\,\frac{0}{0}:=0\,.
\]
\end{definition}

\begin{theorem}[Verification Theorem]\label{th:verification}
Let $b$ be a bounding function for our piecewise deterministic Markov decision model with 
$\E\big(|b(C_T,X^\xi_T)|\big|C_t=c,X_t=x\big)<\infty$ for all 
$\xi,t,c,x$. Let 
$v\in \cC^{1,0,1}(\stateT)\cap \B_{b,\gamma}$
be a solution of the HJB equation \eqref{eq:HJB} and let $\alpha^\ast$ be a 
maximizer
for \eqref{eq:HJB}, leading to the state process $\left(C,X^{\xi^\ast}\right)$.

Then $v=V$ and $\xi^\ast=\alpha^\ast(t,C_{t-},X^{\xi^\ast}_{t-})$ is an optimal feedback-type Markov policy.
\end{theorem}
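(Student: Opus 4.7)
The plan is to follow the standard verification template for controlled PDMPs, essentially adapting \cite[Theorem 8.2.8]{BR2011} to our specific model. The idea is to apply the PDMP version of the Dynkin/It\^o formula to $v$ along the state process and exploit the HJB inequality/equality to sandwich $V$ and $v$.

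First I would fix an arbitrary admissible control $\xi\in\cM_b$ with corresponding state process $(C,X^\xi)$ started at $(c,x)$ at time $t$, and apply the Dynkin formula for PDMPs (see \cite[Sec.~26]{davis1993}) to $v(s,C_s,X^\xi_s)$ on $[t,T]$. Since $v\in\cC^{1,0,1}$, this yields
$$v(T,C_T,X^\xi_T)-v(t,c,x)=\int_t^T \bigl(v_s+q(\xi_s)v_x+\cA^{\xi_s}v\bigr)(s,C_s,X^\xi_s)\,ds+M^\xi_T-M^\xi_t,$$
where $M^\xi$ is the compensated jump local martingale generated by the kernel $Q$. The integrability hypothesis $\erw[b(C_T,X^\xi_T)\mid\cdot]<\infty$, together with $v\in\B_{b,\gamma}$ and the three conditions of Definition \ref{def:bounding}, upgrades $M^\xi$ to a genuine martingale on $[t,T]$. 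Taking conditional expectation and invoking the HJB inequality $v_s+q(\xi_s)v_x+\cA^{\xi_s}v\le 0$, which holds pointwise for every $\xi_s\in[0,1]$ by the supremum in \eqref{eq:HJB}, gives
$$\erw\bigl[v(T,C_T,X^\xi_T)\,\big|\,C_t=c,X_t=x\bigr]\le v(t,c,x).$$
Using the terminal condition $v(T,c,x)=u(x+\psi(c))$ and taking the supremum over $\xi\in\cM_b$ yields $V(t,c,x)\le v(t,c,x)$.

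For the reverse inequality I would insert the feedback $\xi^\ast_s=\alpha^\ast(s,C_{s-},X^{\xi^\ast}_{s-})$. One first has to verify that this feedback defines a bona fide Markov control in $\cM_b$ and that the resulting PDMP $(C,X^{\xi^\ast})$ is well-posed; this is standard because between jumps the dynamics are deterministic ODEs driven by the measurable map $\alpha^\ast$, and existence of such a measurable maximizer follows from compactness of $[0,1]$ together with continuity of $q$ and of $\xi\mapsto\cA^\xi v$. By construction of $\alpha^\ast$, the HJB inequality is an equality along $(C,X^{\xi^\ast})$, so repeating the Dynkin computation now gives
$$\erw\bigl[u(X^{\xi^\ast}_T+\psi(C_T))\,\big|\,C_t=c,X_t=x\bigr]=v(t,c,x)\le V(t,c,x).$$
Combining the two inequalities proves $v=V$ and identifies $\xi^\ast$ as an optimal feedback-type Markov policy.

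The main obstacle, and the only non-algebraic step, is controlling the integrability of the jump term and promoting $M^\xi$ to a true martingale. This is exactly what the bounding function $b$ and the weighted sup-norm $\|\cdot\|_{b,\gamma}$ were introduced for, and the estimates run as in \cite{BR2010,BR2011}; Lemma \ref{lem:w-bounded} in fact suggests that a constant bounding function may suffice in our setting, which would trivialise the integrability bookkeeping. Apart from this, the argument is a direct transcription of the classical verification theorem for piecewise deterministic Markov decision processes.
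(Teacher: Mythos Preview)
Your argument is the standard Dynkin-formula verification and is correct for the theorem \emph{as stated}, i.e.\ under the hypothesis that a global $\cC^{1,0,1}$ solution $v$ of \eqref{eq:HJB} is already given on all of $[0,T]\times\state$. This is exactly what \cite[Theorem 8.2.8]{BR2011} provides, and the paper in fact cites that result rather than re-doing the Dynkin computation.

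The paper's proof, however, does something genuinely different and in a sense complementary. It does \emph{not} re-prove the abstract verification step you sketch; instead it (i) establishes that the hypotheses of the theorem are met in the concrete model (existence of a maximizer via Lemma~\ref{lem:power-series}, a bounding function via Lemma~\ref{lem:bounding}, a solution via Lemma~\ref{lem:backward_ode}), and (ii)---this is the real point---deals with the fact that Lemma~\ref{lem:backward_ode} only produces a \emph{local} solution $w$ on $[T-\varepsilon,T]$. The paper applies the abstract verification on that subinterval to identify $w$ with the a~priori bounded function $W$ of Lemma~\ref{lem:w-bounded}, and then uses the bound $|W|\le \|q\|_\infty T+\|\psi\|_\infty$ to rule out the alternative $\|w(T-\varepsilon,\cdot)\|_\infty=2(\|q\|_\infty T+\|\psi\|_\infty)$ in the maximality dichotomy of Lemma~\ref{lem:backward_ode}, forcing $\varepsilon=T$. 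Your side remark that Lemma~\ref{lem:w-bounded} ``suggests a constant bounding function may suffice'' misreads its role: it is not used to simplify integrability bookkeeping, but as the a~priori estimate that drives this local-to-global continuation. Your proof buys a self-contained verification argument; the paper's buys the existence of $v$ on the full horizon, which your hypothesis simply assumes.
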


\begin{remark}
In the statement of \cite[Theorem 8.2.8]{BR2011} there is 
another condition required, namely that $\alpha_b<1$ for a
constant $\alpha_b$ depending on $b,Q$ and the arbitrary $\gamma$ from
Definition \ref{def:gamma}. But it is argued in \cite{BR2010}  that
for finite horizon problems $\gamma$ can always be chosen large enough to 
satisfy $\alpha_b<1$.
\end{remark}

For proving Theorem \ref{th:verification}, we first need to prove existence of a bounding function.

\begin{lemma}\label{lem:bounding}
The function $b$ defined by $b(c,x):=\exp(\eta |x|)$ is a bounding function for our piecewise deterministic Markov decision model.
\end{lemma}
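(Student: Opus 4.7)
The plan is to verify the three conditions of Definition \ref{def:bounding} in turn for $b(c,x)=\exp(\eta|x|)$, producing explicit constants $c_u,c_Q,c_{\mathrm{flow}}$. Both (i) and (iii) will be straightforward; the main work is in (ii), where one has to control the jump kernel $Q$ uniformly in the control $\xi\in[0,1]$ and exploit Assumption \ref{ass:konv-rad}.

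For condition (i), using $u(y)=-\exp(-\eta y)$ and boundedness of $\psi$ one has $|u(x+\psi(c))|=\exp(-\eta x)\exp(-\eta\psi(c))\le \exp(\eta\|\psi\|_\infty)\exp(\eta|x|)$, so $c_u:=\exp(\eta\|\psi\|_\infty)$ works. For condition (iii), the deterministic flow only increments the wealth coordinate by a quantity bounded in absolute value by $T\|q\|_\infty$, so the triangle inequality yields $b(c,x+\int_0^T\!\int_0^1 q(\xi)r_s(d\xi)ds)\le \exp(\eta T\|q\|_\infty)b(c,x)$, and one takes $c_{\mathrm{flow}}:=\exp(\eta T\|q\|_\infty)$.

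The heart of the argument is condition (ii). Plugging $b$ into the kernel and using $|x-\sum_{j\in\cK}Y_{1,j}|\le|x|+\sum_{j\in\cK}Y_{1,j}$ (the $Y_{1,j}$ being nonnegative),
\begin{align*}
\int b(\tilde c,\tilde x)\,Q(d\tilde c\times d\tilde x|(c,x),\xi)
&\le \exp(\eta|x|)\sum_{k=0}^\infty a_k\sum_{\cK\subseteq\{1,\dots,k\}}\xi^{|\cK|}(1-\xi)^{k-|\cK|}\E\!\left(\exp(\eta Y_{1,1})\right)^{|\cK|}\\
&=\exp(\eta|x|)\sum_{k=0}^\infty a_k\bigl(\xi\E(\exp(\eta Y_{1,1}))+(1-\xi)\bigr)^k,
\end{align*}
where the last equality is the binomial theorem. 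Since $\E(\exp(\eta Y_{1,1}))\ge 1$, the convex combination is bounded by $\E(\exp(\eta Y_{1,1}))$ uniformly in $\xi\in[0,1]$, so the right hand side is at most $\exp(\eta|x|)\,G_{A_1}\!\bigl(\E(\exp(\eta Y_{1,1}))\bigr)$. By Assumption \ref{ass:konv-rad} this last quantity is finite, which yields $c_Q:=G_{A_1}(\E(\exp(\eta Y_{1,1})))$.

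The only mildly delicate step is the uniformity in $\xi$ in (ii); it is important that one estimates the $\xi$-dependent convex combination by its maximum value $\E(\exp(\eta Y_{1,1}))$ before invoking Assumption \ref{ass:konv-rad}, as this turns a $\xi$-dependent bound into the single universal constant $c_Q$ required by Definition \ref{def:bounding}. With $c_u,c_Q,c_{\mathrm{flow}}$ as above, all three conditions hold and the proof is complete.
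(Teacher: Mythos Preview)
Your proof is correct and follows the same route as the paper's: verify the three conditions of Definition~\ref{def:bounding} directly, with (i) and (iii) immediate from the boundedness of $\psi$ and $q$, and (ii) handled via the triangle inequality $|x-Z_1^\xi|\le |x|+Z_1^\xi$. If anything, your treatment of (ii) is more complete than the paper's: the paper stops at the bound $b(c,x)\,\E\!\bigl(\exp(\eta Z_1^\xi)\bigr)$ without making the uniformity in $\xi$ explicit, whereas you carry the computation through to the $\xi$-independent constant $c_Q=G_{A_1}\!\bigl(\E(e^{\eta Y_{1,1}})\bigr)$ and invoke Assumption~\ref{ass:konv-rad} to ensure its finiteness.
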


\begin{proof}
We need to check the conditions given in Definition \ref{def:bounding}.
\begin{enumerate}
\renewcommand{\theenumi}{\roman{enumi}}
\item $u(x+\psi(c))=-\exp(-\eta(x+\psi(c))$ such that 
$|u(x+\psi(c))|=\exp(-\eta(x+\psi(c))\le \exp(\eta \|\psi\|_\infty)b(c,x)$.
\item
$
\int  b(\tilde c,\tilde x)Q(d\tilde c\times d\tilde x|(c,x),\xi)
=\int \exp(\eta |\tilde x|)Q(d\tilde c\times d\tilde x|(c,x),\xi)
=\E\left(\exp\left(\eta \Big|x-Z_1^{\xi}\Big|\right)\right)\\
\le b(c,x)\E\left(\exp\left(\eta Z_1^{\xi}\right)\right)
$.
\item 
$
b\left(c,x+\int_0^T\int_0^1 q(\xi)r_s(d\xi)ds\right)=\exp\left(\eta \Big |x+\int_0^T\int_0^1 q(\xi)r_s(d\xi)ds\Big |\right)
\le\exp(\eta \|q\|_{\infty}  T)b(c,x)
$.
\end{enumerate}
\end{proof}

Now we need to show that the backward equation \eqref{eq:HJBw} has a solution and hence also \eqref{eq:HJB} has a solution.\\

Define $\cH$ on $\cC_b(\R_0^+)$ by
\begin{align}\label{eq:tildeH}
 (\cH \varphi)(c):=\sup_{\xi\in[0,1]}\left(q(\xi)+(\tilde{\cA}^\xi \varphi)(c)\right)\,.
\end{align}
We show that if $\varphi\in \cC_b(\R_0^+)$, then $\cH \varphi \in \cC_b(\R_0^+)$ and that $\cH$ is locally Lipschitz.
For this we write $\cH=g \circ h \circ f$ and show that $g,h,f$ are locally Lipschitz and $g$ is $\cC_b(\R_0^+)$-valued.

\begin{lemma}\label{lem:power-series}
For $\sigma \in \cC_b(R_0^+)$ the mapping $\xi \mapsto \E(\sigma(Z_1)\exp(\eta Z_1^{\xi}) )+\intens/\eta$
is a power series in $\xi$.
Its coefficients are of the form 
$h_k(\sigma)=\E(\delta_k \sigma(Z_1))$ for non-negative random variables $\delta_k$ with $\sum_{k=0}^\infty\E(\delta_k)<\infty$ that do not dependent on $\xi$ and $\sigma$. 
The power series converges uniformly on $[0,1]$. 
\end{lemma}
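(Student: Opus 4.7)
The plan is to expand $\exp(\eta Z_1^\xi)$ as a product over the jump indices $j=1,\dots,A_1$, integrate out the Bernoulli$(\xi)$ indicators $\1_{\{U_{1,j}\le\xi\}}$ using the independence and uniform distribution of the $U_{1,j}$, and then rearrange the resulting (finite) product as a polynomial in $\xi$ whose coefficients involve only $A_1$ and the $Y_{1,j}$.

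Concretely, the first step is the pointwise identity $\exp(\eta Y_{1,j}\1_{\{U_{1,j}\le\xi\}})=1+(\exp(\eta Y_{1,j})-1)\1_{\{U_{1,j}\le\xi\}}$, which yields
\begin{align*}
\exp(\eta Z_1^\xi)=\prod_{j=1}^{A_1}\bigl(1+(\exp(\eta Y_{1,j})-1)\1_{\{U_{1,j}\le\xi\}}\bigr)\,.
\end{align*}
Conditional on $A_1$ and the family $(Y_{1,j})_j$, the factors are independent and each indicator has mean $\xi$, so the conditional expectation equals the finite product $\prod_{j=1}^{A_1}(1+(\exp(\eta Y_{1,j})-1)\xi)$. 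Expanding this in powers of $\xi$ and grouping terms gives $\sum_{n=0}^{\infty}\xi^n\delta_n$, where
\[
\delta_n:=\sum_{\cK\subseteq\{1,\dots,A_1\},\,|\cK|=n}\prod_{j\in\cK}(\exp(\eta Y_{1,j})-1)\,,
\]
with the convention $\delta_n=0$ for $n>A_1$. Since the $Y_{1,j}$ are nonnegative claim sizes, each $\delta_n$ is nonnegative; moreover $\delta_n$ depends only on $A_1$ and the $(Y_{1,j})_j$, so neither on $\xi$ nor on $\sigma$. Because $Z_1$ is measurable with respect to $(A_1,(Y_{1,j})_j)$, the tower property together with Tonelli (applied to the nonnegative integrand $\delta_n|\sigma(Z_1)|$) yields
\begin{align*}
\E(\sigma(Z_1)\exp(\eta Z_1^\xi))=\sum_{n=0}^{\infty}\xi^n\,\E(\delta_n\sigma(Z_1))\,,
\end{align*}
and the constant $\intens/\eta$ is simply absorbed into the $n=0$ coefficient.

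For the summability $\sum_n\E(\delta_n)<\infty$, I would specialise the identity to $\sigma\equiv 1$ and $\xi=1$, obtaining $\sum_n\E(\delta_n)=\E\bigl(\prod_{j=1}^{A_1}\exp(\eta Y_{1,j})\bigr)=\E\bigl(\E(\exp(\eta Y_{1,1}))^{A_1}\bigr)$, which is finite by Assumption~\ref{ass:konv-rad} (this is precisely the finiteness already displayed after that assumption). Uniform convergence on $[0,1]$ then follows from the Weierstrass $M$-test with majorants $M_n=\|\sigma\|_\infty\E(\delta_n)$.

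The only slightly delicate point is the interchange of sum and expectation when $\sigma$ is allowed to change sign; this is legitimate a posteriori, since the bound $\sum_n\E(\delta_n|\sigma(Z_1)|)\le\|\sigma\|_\infty\sum_n\E(\delta_n)<\infty$ lets one invoke Fubini on the full integrand. Everything else is a routine manipulation of the product representation.
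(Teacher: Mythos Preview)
Your argument is correct and follows essentially the same route as the paper: both reduce to the product representation $\prod_{j=1}^{A_1}\bigl(1+\xi(\exp(\eta Y_{1,j})-1)\bigr)$, expand in $\xi$, and establish summability by specialising to $\sigma\equiv 1$, $\xi=1$ and invoking Assumption~\ref{ass:konv-rad}. Your version is in fact a bit more explicit than the paper's (you write down $\delta_n$ as the elementary symmetric polynomial in the $\exp(\eta Y_{1,j})-1$ and address the Fubini/Tonelli interchange for signed $\sigma$), whereas the paper simply says ``expanding the above expression yields the first and the second claim''.
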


\begin{proof} 
Let $F$ be the distribution function of $Y_{1,1}$. Then

\begin{align*}
\E&\left(\sigma(Z_1)\exp(\eta Z_1^{\xi}) \right)
=\sum_{k=1}^\infty a_k \E\left(\sigma(Z_1)\exp(\eta Z_1^{\xi}) \Big|A_1=k\right)\\
&=\sum_{k=1}^\infty a_k\int\dots\int \sigma\left(\sum_{j=1}^k y_j\right)\E\left(\exp \left(\eta \sum_{j=1}^k y_j 1_{\{U_{1,j}\leq \xi\}}\right) \right)d F(y_1)\dots d F(y_k)\\
&=\sum_{k=1}^\infty a_k\int\dots\int \sigma\left(\sum_{j=1}^k y_j\right)\prod_{j=1}^k\E\left(\exp \left(\eta  y_j 1_{\{U_{1,j}\leq \xi\}}\right) \right)d F(y_1)\dots d F(y_k)\\
&=\sum_{k=1}^\infty a_k\int\dots\int \sigma\left(\sum_{j=1}^k y_j\right)\prod_{j=1}^k (\xi \exp(\eta y_j)+(1-\xi)) d F(y_1)\dots d F(y_k)\\
&=\sum_{k=1}^\infty a_k\E \left( \sigma(Z_1)\prod_{j=1}^k (\xi (\exp(\eta Y_{1,j})-1)+1) \Big|A_1=k\right)\,.
\end{align*}
Expanding the above expression yields the first and the second claim of the lemma.
Setting $\sigma\equiv1$, we get $\sum_{k=0}^\infty\E(\delta_k)<\infty$.
Setting $\xi=1$ gives
\begin{align*}
\left|\sum_{k=1}^\infty a_k\E \left( \sigma(Z_1)\prod_{j=1}^k  \exp(\eta Y_{1,j}) \Big|A_1=k\right)\right|
\le \|\sigma\|_\infty \sum_{k=1}^\infty a_k\E \left( \exp\left(\eta \sum_{j=1}^k Y_{1,j}\right) \Big|A_1=k\right)\,.
\end{align*}
The right-hand side is finite by Assumption \ref{ass:konv-rad} and by the assumption of the lemma.
\end{proof}

Define the function-space
\begin{align*}
\Lambda:=\left\{\phi:\N_0  \longrightarrow \cC_b(\R_0^+) \colon \|\phi\|_{\Lambda}:=\sum_{k=0}^{\infty}\|\phi_k\|_{\infty}<\infty \right\}
\end{align*}
and let $h:\cC_b(\R_0^+ \times \R_0^+)\longrightarrow\Lambda$ be defined by
$h_k(\phi)(c)=\E(\delta_k \phi(c,Z_1))$, with $\delta_k$ 
as in Lemma \ref{lem:power-series}. Hence, for 
any $\phi\in \cC_b(\R_0^+ \times \R_0^+)$,
$\E\left(\phi(c,Z_1)\exp(\eta Z_1^\xi)\right)=\sum_{k=0}^\infty h_k(\phi)(c) \xi^k$
for every $c\in \R_0^+$.\\

\begin{lemma}\label{lem:h}
The function $h$ is a bounded linear operator.
\end{lemma}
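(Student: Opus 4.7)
The plan is to verify the three requirements separately: that $h$ is well-defined (i.e. that each $h_k(\phi)$ actually lies in $\cC_b(\R_0^+)$), that $h$ is linear, and that it is bounded as a map from $\cC_b(\R_0^+ \times \R_0^+)$ into $\Lambda$. Linearity is immediate from the linearity of expectation, so the real work is in well-definedness and in controlling the $\Lambda$-norm by $\|\phi\|_\infty$.

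First I would fix $\phi \in \cC_b(\R_0^+ \times \R_0^+)$ and $k \in \N_0$ and show $h_k(\phi) \in \cC_b(\R_0^+)$. Boundedness is the trivial estimate $|h_k(\phi)(c)| = |\E(\delta_k \phi(c,Z_1))| \le \|\phi\|_\infty \E(\delta_k)$, which is finite by Lemma \ref{lem:power-series}. For continuity in $c$, I would take a sequence $c_n \to c$; since $\phi$ is continuous, $\phi(c_n,Z_1) \to \phi(c,Z_1)$ pointwise, and $|\delta_k \phi(c_n,Z_1)| \le \|\phi\|_\infty \delta_k$ provides an integrable dominating function (as $\E(\delta_k) < \infty$), so dominated convergence gives $h_k(\phi)(c_n) \to h_k(\phi)(c)$.

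Next I would estimate the $\Lambda$-norm. From the pointwise bound above,
\begin{align*}
\|h(\phi)\|_\Lambda = \sum_{k=0}^\infty \|h_k(\phi)\|_\infty \le \|\phi\|_\infty \sum_{k=0}^\infty \E(\delta_k) =: M \|\phi\|_\infty,
\end{align*}
where $M := \sum_{k=0}^\infty \E(\delta_k) < \infty$ by Lemma \ref{lem:power-series} (applied with $\sigma \equiv 1$). This simultaneously shows that $h(\phi) \in \Lambda$ and that $h$ is bounded with operator norm at most $M$.

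Linearity follows from $h_k(\alpha \phi_1 + \beta \phi_2)(c) = \E(\delta_k(\alpha \phi_1(c,Z_1) + \beta \phi_2(c,Z_1))) = \alpha h_k(\phi_1)(c) + \beta h_k(\phi_2)(c)$ for all $\alpha,\beta \in \R$ and $\phi_1,\phi_2 \in \cC_b(\R_0^+ \times \R_0^+)$. I do not anticipate any serious obstacle here; the only point requiring minor care is the continuity of $c \mapsto h_k(\phi)(c)$, but this is a routine application of dominated convergence using the integrability of $\delta_k$ guaranteed by Lemma \ref{lem:power-series}.
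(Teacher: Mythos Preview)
Your proposal is correct and follows essentially the same approach as the paper: continuity of each $h_k(\phi)$ via dominated convergence with majorant $\|\phi\|_\infty\delta_k$, boundedness of $h_k(\phi)$ by $\|\phi\|_\infty\E(\delta_k)$, and the $\Lambda$-norm estimate $\|h(\phi)\|_\Lambda\le\|\phi\|_\infty\sum_{k}\E(\delta_k)$ using Lemma~\ref{lem:power-series}. Your version is in fact slightly more explicit about linearity, which the paper leaves implicit.
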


\begin{proof}
We need to prove that for every $k\in \N_0$ the mapping $h_k$ is a bounded linear operator 
$\cC_b(\R_0^+ \times \R_0^+)\longrightarrow \cC_b(\R_0^+)$.

Let $\phi\in \cC_b(\R_0^+ \times \R_0^+)$. We show that the mapping
$c\mapsto \E(\delta_k \phi(c,Z_1))$ is continuous and bounded on $\R_0^+$.
Let $c_n \to c$ in $\R_0^+$. 
Then $\phi(c_n,z) \to \phi(c,z)$ for all 
$z\in \R_0^+$.
 The sequence $(\phi(c_n,\cdot)\delta_k)_{n\ge 0}$ is dominated  by 
$\|\phi\|_\infty \delta_k$, which is integrable.
Hence, $\E(\phi(c_n,Z_1)\delta_k)\to\E(\phi(c,Z_1)\delta_k)$ by the dominated convergence theorem. Thus $h_k(\phi)$ is continuous. Moreover, $h_k(\phi)$ is
bounded, since $\left|\E(\phi(c,Z_1)\delta_k)\right|\le \|\phi\|_\infty \E(\delta_k)$.

For $\phi\in \cC_b(\R_0^+ \times \R_0^+)$ it holds that
 $\|h\|_\Lambda\le \sum_{k=0}^\infty\|h_k(\phi)\|_\infty =\sum_{k=0}^\infty \sup_c|\E(\delta_k\phi(c,Z_1))| \le \|\phi\|_\infty\sum_{k=0}^\infty\E(\delta_k)$.
Thus $h$ is bounded by Lemma \ref{lem:power-series}.
 \end{proof}

For a sequence $(x_k)_{k\ge 0}$ in $\R$ define the function $\tilde g$ by
\begin{align}\label{eq:tildeg}
\tilde g(x):=\sup_{\xi\in[0,1]}\left(q(\xi)+\frac{\intens}{\eta}-\sum_{k=0}^\infty x_k \xi ^k\right)\,.
\end{align}

\begin{lemma}
\label{lem:g-loc-lip}
Let $\tilde g$ be defined as in \eqref{eq:tildeg}. Then
\begin{enumerate}
\item $\tilde g$ is defined on $\ell^1$ and it is bounded on every norm-bounded subset of $\ell^1$;
\item $\tilde g$ is convex;
\item $\tilde g$ is Lipschitz on every norm-bounded subset of $\ell^1$.
\end{enumerate}
\end{lemma}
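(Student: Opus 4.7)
The main idea is to observe that $\tilde g$ is a pointwise supremum of affine functions of $x$, parametrized by $\xi\in[0,1]$. Concretely, for each fixed $\xi$, the map
\[
A_\xi : \ell^1 \longrightarrow \R,\qquad x\longmapsto q(\xi)+\frac{\intens}{\eta}-\sum_{k=0}^\infty x_k\,\xi^k
\]
is affine, because $x\mapsto \sum_k x_k\,\xi^k$ is a bounded linear functional on $\ell^1$ (its dual-pairing representative $(\xi^k)_{k\ge 0}\in\ell^\infty$ has norm $\sup_k \xi^k\le 1$). Thus $\tilde g=\sup_{\xi\in[0,1]} A_\xi$, and all three assertions will follow from this structural observation.

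For (i), note that $q$ is continuous on $[0,1]$, hence $\|q\|_\infty<\infty$, and for every $x\in\ell^1$, $\xi\in[0,1]$, $\big|\sum_k x_k\xi^k\big|\le \|x\|_{\ell^1}$ (absolutely convergent). Upper bound: $\tilde g(x)\le \|q\|_\infty+\intens/\eta+\|x\|_{\ell^1}$. Lower bound: evaluating at $\xi=0$, $\tilde g(x)\ge q(0)+\intens/\eta-x_0\ge -\|q\|_\infty-\|x\|_{\ell^1}$. Together, $|\tilde g(x)|\le \|q\|_\infty+\intens/\eta+\|x\|_{\ell^1}$, which is bounded on every norm-bounded subset of $\ell^1$.

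Assertion (ii) is then immediate: a pointwise supremum of a family of affine (hence convex) functions is convex, and by (i) the supremum is everywhere finite, so $\tilde g$ is a real-valued convex function on $\ell^1$.

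For (iii), I would apply the standard inequality $|\sup_\xi f(\xi)-\sup_\xi g(\xi)|\le \sup_\xi |f(\xi)-g(\xi)|$ to $f=A_\xi(\cdot)|_x$, $g=A_\xi(\cdot)|_y$. The affine parts $q(\xi)+\intens/\eta$ cancel and one obtains
\[
|\tilde g(x)-\tilde g(y)|\le \sup_{\xi\in[0,1]}\Big|\sum_{k=0}^\infty (y_k-x_k)\,\xi^k\Big|\le \sum_{k=0}^\infty |y_k-x_k|\,\xi^k\Big|_{\xi=1}=\|x-y\|_{\ell^1}.
\]
Thus $\tilde g$ is in fact globally $1$-Lipschitz on $\ell^1$, which is strictly stronger than (iii). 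There is no real obstacle here; the only subtle point is checking that the linear functionals $x\mapsto \sum_k x_k\xi^k$ have operator norm uniformly bounded in $\xi\in[0,1]$, which yields both the finiteness of the supremum and the uniform Lipschitz constant.
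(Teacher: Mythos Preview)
Your proof is correct, and for parts (i) and (ii) it matches the paper's argument essentially word for word. For part (iii), however, you take a different and more elementary route. The paper invokes the general fact (after Roberts--Varberg) that a real-valued convex function which is bounded on bounded sets is Lipschitz on bounded sets: given $x,y$ in the ball of radius $R$, one sets $z=y+\tfrac{R}{\|y-x\|}(y-x)$, writes $y$ as a convex combination of $x$ and $z$, and uses convexity together with boundedness on the $2R$-ball to extract a Lipschitz estimate. Your approach instead exploits the explicit structure $\tilde g=\sup_{\xi\in[0,1]} A_\xi$ with each $A_\xi$ affine and $1$-Lipschitz (uniformly in $\xi$), and the elementary inequality $|\sup_\xi f(\xi)-\sup_\xi g(\xi)|\le \sup_\xi |f(\xi)-g(\xi)|$ then yields global $1$-Lipschitz continuity directly. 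Your argument is shorter, gives a sharper conclusion (global Lipschitz with explicit constant $1$, rather than merely local Lipschitz with an $R$-dependent constant), and in fact makes the detour through convexity in (ii) unnecessary for the purposes of (iii). The paper's approach, on the other hand, is the robust general mechanism that would still apply had $\tilde g$ been convex but not a supremum of uniformly Lipschitz affine maps.
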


\begin{proof}
For $x=(x_k)_{k\ge 0}\in\ell^1$ and $\xi\in [0,1]$ we have $|\sum_{k=0}^\infty x_k \xi^k|\le \|x\|_1$. The function $\xi\mapsto  \sum_{k=0}^\infty x_k \xi^k$, $\xi\in [0,1]$ is well-defined and continuous as a uniform limit of continuous 
functions on $[0,1]$. 
Since $q$ is also continuous, the first statement follows.
 
The proof of the second statement is straightforward.

Following \cite{robvar74} we use the convexity of $\tilde g$ to show that $\tilde g$ is Lipschitz on $\{x\in\ell^1:\|x\|\le R\}$.
Let $\|x\|,\|y\|\le R$ and define $z:=y+\frac{R}{\|y-x\|}(y-x)$.
It holds that $\|z-y\|=R$ and hence $\|z\|\le2R$.
By the definition of $z$ we have that $y=\beta z + (1-\beta)x$, where $\beta=\|y-x\|/(\|y-x\|+R)$.
Since $\tilde g$ is convex, $\tilde g(y)\le \beta \tilde g(z)+(1-\beta)\tilde g(x)$ and hence
$\|\tilde g(y)-\tilde g(x)\| = \|\beta(\tilde g(z)-\tilde g(x))\| \le 2\beta \sup_{\|z\|\le2R}|\tilde g(z)| \le 2\beta c \le (2C/R) \|y-x\|$ for some constant $c>0$, since $\tilde g$ is bounded on $\{z\in\ell^1:\|z\|\le2R\}$.
\end{proof}

For $\phi \in \Lambda$ let $\phi(c):=(\phi_k(c))_{k\ge 0}$ and define the function $g$ by $g(\phi)(c)=\tilde g(\phi(c))$.

\begin{lemma}\label{lem:g}
 The function $g$ is $\cC_b(\R_0^+)$-valued and locally Lipschitz.
\end{lemma}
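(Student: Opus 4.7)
The plan is to split the statement into two assertions, (a) that $g(\phi)\in\cC_b(\R_0^+)$ for every $\phi\in\Lambda$, and (b) that $g$ is Lipschitz on every ball $\{\phi\in\Lambda:\|\phi\|_\Lambda\le R\}$, and deduce both from Lemma \ref{lem:g-loc-lip} combined with the $\ell^1$-summability built into the definition of $\Lambda$.

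For (a), boundedness is immediate: for every $c\in\R_0^+$ the sequence $\phi(c)=(\phi_k(c))_{k\ge 0}$ satisfies $\|\phi(c)\|_1\le\sum_{k\ge 0}\|\phi_k\|_\infty=\|\phi\|_\Lambda$, so $|g(\phi)(c)|=|\tilde g(\phi(c))|$ is bounded by $\sup\{|\tilde g(x)|:\|x\|_1\le\|\phi\|_\Lambda\}$, which is finite by Lemma \ref{lem:g-loc-lip}(1). For continuity I would first show that the map $c\mapsto\phi(c)$ is continuous from $\R_0^+$ into $\ell^1$. Given $\varepsilon>0$, choose $N$ so that $\sum_{k>N}\|\phi_k\|_\infty<\varepsilon$; for $c_n\to c$ the finite sum $\sum_{k=0}^N|\phi_k(c_n)-\phi_k(c)|$ tends to $0$ by pointwise continuity of each $\phi_k\in\cC_b(\R_0^+)$, while the tail contributes at most $2\varepsilon$. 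Then continuity of $g(\phi)$ follows by composing this $\ell^1$-valued continuous map with $\tilde g$, which is continuous by Lemma \ref{lem:g-loc-lip}(3).

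For (b), given $\phi,\psi$ with $\|\phi\|_\Lambda,\|\psi\|_\Lambda\le R$, the bound above gives $\|\phi(c)\|_1,\|\psi(c)\|_1\le R$ for every $c$, so by the local Lipschitz constant $L_R$ of $\tilde g$ on $\{x\in\ell^1:\|x\|_1\le R\}$ supplied by Lemma \ref{lem:g-loc-lip}(3),
\begin{align*}
|g(\phi)(c)-g(\psi)(c)|
=|\tilde g(\phi(c))-\tilde g(\psi(c))|
\le L_R\|\phi(c)-\psi(c)\|_1
\le L_R\sum_{k\ge 0}\|\phi_k-\psi_k\|_\infty
=L_R\|\phi-\psi\|_\Lambda\,.
\end{align*}
Taking the supremum in $c$ gives $\|g(\phi)-g(\psi)\|_\infty\le L_R\|\phi-\psi\|_\Lambda$, so $g$ is Lipschitz on the ball of radius $R$ with constant $L_R$.

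The only subtle point I anticipate is the continuity step in (a): pointwise continuity of each coordinate $\phi_k$ does not in general upgrade to continuity of the $\ell^1$-valued map, but here the uniform summability $\sum_k\|\phi_k\|_\infty<\infty$ furnishes the dominating tail estimate needed for the $\varepsilon/3$ argument. Everything else is a direct transfer of the properties of $\tilde g$ established in Lemma \ref{lem:g-loc-lip} using the pointwise inequality $\|\phi(c)\|_1\le\|\phi\|_\Lambda$.
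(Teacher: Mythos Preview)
Your proposal is correct and follows essentially the same approach as the paper: both establish continuity of $g(\phi)$ by first proving that $c\mapsto\phi(c)$ is continuous into $\ell^1$ via the tail estimate from $\|\phi\|_\Lambda<\infty$, and both derive the local Lipschitz property by transferring the constant $L_R$ of $\tilde g$ from Lemma~\ref{lem:g-loc-lip} pointwise via $\|\phi(c)\|_1\le\|\phi\|_\Lambda$. You are in fact slightly more explicit than the paper in spelling out the boundedness of $g(\phi)$ and the composition with the continuous $\tilde g$, but the argument is the same.
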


\begin{proof} Let $\phi=(\phi_k)_{k\ge 0}\in \Lambda$.
 Let $c_n\to c$ in $\R_0^+$ and let $\varepsilon >0$. There exists $k_0\in \N_0$ such that $\sum_{k\ge k_0} \|\phi_k\|_\infty<\varepsilon/4$, and for $n$ large enough $\sum_{k=0}^{k_0} |\phi_k(c_n)-\phi_k(c)|<\varepsilon/2$.
 Thus, $\sum_{k=0}^\infty |\phi_k(c_n)-\phi_k(c)| \le \sum_{k=0}^{k_0} |\phi_k(c_n)-\phi_k(c)|+\sum_{k=k_0+1}^{\infty} |\phi_k(c_n)-\phi_k(c)|< \varepsilon$.
 
 The claim that $g$ is locally Lipschitz follows from Lemma \ref{lem:g-loc-lip}:
 let $R>0$ and let  $\phi^1,\phi^2\in\Lambda$ with $\|\phi^1\|_\Lambda\le R$
and $\|\phi^2\|_\Lambda\le R$. $\tilde g$ is Lipschitz on the ball with radius
$R$ in $\ell^1$. Denote the corresponding Lipschitz constant by $L_R$. 
Then $\phi^1(c),\phi^2(c)$ lie in the ball with radius $R$ in $\ell^1$.
Hence, $\|g(\phi^1)-g(\phi^2)\|_\infty=\sup_c|\tilde g(\phi^1(c))-\tilde g(\phi^2(c))|\le L_R \|\phi^1(c)-\phi^2(c)\|_1 \le L_R\|\phi^1-\phi^1\|_\Lambda$.
\end{proof}

Finally, define $f:\cC_b(\R_0^+) \longrightarrow \cC_b(\R_0^+ \times \R^+)$, $f(w)(c,z):=\exp(-\eta(w(c+z)-w(c))$
and note that $f$ is locally Lipschitz.

\begin{lemma}\label{lem:H-loc-lip}
 Let $\cH$ be defined as in \eqref{eq:tildeH}.
 If $\varphi\in \cC_b(\R_0^+)$, then $\cH \varphi \in \cC_b(\R_0^+)$ and $\cH$ is locally Lipschitz.
\end{lemma}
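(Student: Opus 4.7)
The plan is to exploit the decomposition $\cH = g \circ h \circ f$ that has been set up in the three preceding lemmas, so the remaining work is just to verify that this factorization genuinely reproduces $\cH$ and that composition preserves local Lipschitz continuity with $\cC_b(\R_0^+)$ as target.

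First I would verify the factorization. Rewriting
\[
q(\xi)+\tilde{\cA}^\xi \varphi(c)
=q(\xi)+\frac{\intens}{\eta}-\frac{\intens}{\eta}\exp(\eta\varphi(c))\,\E\!\left(\exp(-\eta\varphi(c+Z_1))\exp(\eta Z_1^\xi)\right),
\]
and observing $\exp(\eta\varphi(c))\exp(-\eta\varphi(c+z))=f(\varphi)(c,z)$, I can apply Lemma \ref{lem:power-series} to the inner expectation: its coefficients (up to the additive $\intens/\eta$ bookkeeping absorbed in the statement) are precisely $h_k(f(\varphi))(c)$. Taking the supremum over $\xi\in[0,1]$ is then the operation $\tilde g$ from \eqref{eq:tildeg}, so $(\cH\varphi)(c)=\tilde g\bigl((h(f(\varphi)))(c)\bigr)=g(h(f(\varphi)))(c)$, at least after absorbing the constants $\intens/\eta$ appropriately into the definitions of $h$ and $\tilde g$.

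Next I would assemble the three Lipschitz statements in the appropriate order. The map $f:\cC_b(\R_0^+)\to\cC_b(\R_0^+\times\R_0^+)$ is locally Lipschitz by the elementary estimate for $\exp$ on bounded arguments, as noted just above the lemma: for $\|\varphi_1\|_\infty,\|\varphi_2\|_\infty\le R$ one has $\|f(\varphi_1)-f(\varphi_2)\|_\infty\le 2\eta e^{2\eta R}\|\varphi_1-\varphi_2\|_\infty$. Lemma \ref{lem:h} gives that $h$ is a bounded linear operator, hence globally Lipschitz. Lemma \ref{lem:g} yields that $g$ is $\cC_b(\R_0^+)$-valued and locally Lipschitz. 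To conclude that their composition $\cH$ is locally Lipschitz, I use the standard fact that a locally Lipschitz map between normed spaces sends norm-bounded sets to norm-bounded sets: given $\varphi_0\in\cC_b(\R_0^+)$, on a ball $B(\varphi_0,R)$ the map $f$ is Lipschitz with some constant $L_f$, so $f(B(\varphi_0,R))$ is contained in a bounded set $B_1\subset\cC_b(\R_0^+\times\R_0^+)$; on $B_1$ the operator $h$ is Lipschitz with some constant $L_h$, so $h(B_1)$ lies in a bounded set $B_2\subset\Lambda$; and on $B_2$ the function $g$ is Lipschitz with constant $L_g$ by Lemma \ref{lem:g}. Chaining gives $\|\cH\varphi_1-\cH\varphi_2\|_\infty\le L_gL_hL_f\,\|\varphi_1-\varphi_2\|_\infty$ on $B(\varphi_0,R)$. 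Since $g$ maps into $\cC_b(\R_0^+)$, so does $\cH$.

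The main obstacle is really only bookkeeping: ensuring that the intermediate images $f(B(\varphi_0,R))$ and $h(f(B(\varphi_0,R)))$ are contained in the balls on which $h$ is bounded and $g$ is Lipschitz, respectively, and that the $\intens/\eta$ constants are correctly absorbed so the identification $(\cH\varphi)(c)=g(h(f(\varphi)))(c)$ is literal. Both points are routine once one has unwound the definition of $\tilde{\cA}^\xi$, so essentially no new analytic input is required beyond Lemmas \ref{lem:power-series}, \ref{lem:h} and \ref{lem:g}.
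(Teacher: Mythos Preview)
Your proposal is correct and follows essentially the same approach as the paper: the paper's proof simply asserts $\cH=g\circ h\circ f$, invokes Lemma~\ref{lem:g} for the $\cC_b(\R_0^+)$-valuedness, and states that $\cH$ is locally Lipschitz as a composition of locally Lipschitz maps via Lemmas~\ref{lem:h} and~\ref{lem:g}. Your write-up is considerably more detailed---you spell out why the factorization holds, give an explicit local Lipschitz constant for $f$, and carry through the ball-chasing argument for the composition---but the underlying strategy is identical.
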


\begin{proof}
 We have $\cH=g \circ h \circ f$. The first claim follows from Lemma \ref{lem:g}.
 Further, $\cH$ is locally Lipschitz as a concatenation of locally Lipschitz functions.
 The latter follows from Lemma \ref{lem:h} and Lemma \ref{lem:g}.
\end{proof}

Now we prove that \eqref{eq:HJBw} has a unique maximal local solution.

\begin{lemma}\label{lem:backward_ode}
Let $\psi \in \cC_b(\R)$.
Then the backward equation \eqref{eq:HJBw} has a unique maximal local solution.
\end{lemma}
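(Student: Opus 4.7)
The plan is to reformulate the backward PDE \eqref{eq:HJBw} as an abstract ordinary differential equation in the Banach space $X := \cC_b(\R_0^+)$ equipped with the supremum norm, and then to invoke the Picard--Lindel\"of theorem in Banach spaces.

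First, I perform a time reversal: set $\tilde w(t,c) := w(T-t,c)$ and $\tilde \psi := \psi$. The equation \eqref{eq:HJBw} then becomes the forward Cauchy problem
\begin{align*}
\tilde w_t(t,c) &= (\cH \tilde w)(t,\cdot)(c)\,,\\
\tilde w(0,\cdot) &= \psi\,.
\end{align*}
Viewing $t\mapsto \tilde w(t,\cdot)$ as an $X$-valued function, this is an abstract ODE
\[
\tilde w'(t) = \cH(\tilde w(t)), \qquad \tilde w(0) = \psi \in X,
\]
with right-hand side $\cH : X \to X$.

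Next, by Lemma \ref{lem:H-loc-lip}, the operator $\cH$ maps $X$ into $X$ and is locally Lipschitz on $X$. Moreover, by Lemma \ref{lem:g-loc-lip} (used through Lemma \ref{lem:g}) together with Lemma \ref{lem:h}, $\cH$ is bounded on every norm-bounded subset of $X$. These are exactly the hypotheses under which the classical Picard--Lindel\"of theorem in a Banach space guarantees, for every initial condition $\psi \in X$, the existence of a time $\delta > 0$ and a unique continuously differentiable function $\tilde w : [0,\delta] \to X$ satisfying the Cauchy problem above. Reversing the time transformation produces a unique local solution of \eqref{eq:HJBw} on $[T-\delta,T]$.

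Finally, a maximal local solution is obtained by the standard extension argument: I take the supremum over all $\delta > 0$ for which the local solution exists, and by uniqueness the extensions are consistent, giving a unique maximal local solution on some interval $(T_*,T]$ with $0\le T_* < T$. The main technical ingredient is that $\cH$ lives in and maps into $X$ with the local Lipschitz property, which has already been established in Lemma \ref{lem:H-loc-lip}; the only subtlety worth highlighting is that the time reversal preserves both the local existence and the maximality, so once the abstract ODE is solved the lemma follows immediately.
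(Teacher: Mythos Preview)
Your proposal is correct and follows essentially the same route as the paper: both recast \eqref{eq:HJBw} as an abstract ODE in the Banach space $\cC_b(\R_0^+)$, appeal to the local Lipschitz property of $\cH$ established in Lemma~\ref{lem:H-loc-lip}, and invoke Picard--Lindel\"of to obtain a unique maximal local solution. The only cosmetic difference is that you perform an explicit time reversal, whereas the paper writes the backward ODE $\varphi'(t)=-\cH\varphi(t)$, $\varphi(T)=\psi$, directly; note, however, that the paper makes the maximality statement relative to the specific ball of radius $2(\|q\|_\infty T+\|\psi\|_\infty)$, a detail that is exploited later in the proof of Theorem~\ref{th:verification}.
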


\begin{proof}
The backward equation \eqref{eq:HJBw} is an initial value problem with $\cC_b(\R_0^+)$-valued solution:
\begin{align}\label{eq:ode}
\varphi'(t)=-\cH\varphi(t)\,, \,\,\text{ and }\,\,
\varphi(T)=\psi(c)\,.
\end{align}

By Lemma \ref{lem:H-loc-lip}, $\cH$ is $\cC_b(\R_0^+)$-valued and locally
Lipschitz. In particular, $\cH$ is Lipschitz on the ball with
radius $2(\|q\|_{\infty} T+\|\psi\|_\infty)$.
From the Picard-Lindel\"of theorem
on existence and uniqueness of solutions of ordinary differential equations we
get existence and uniqueness of a maximal local solution of \eqref{eq:ode},
i.e.~there exists $\varepsilon>0$ and a solution $\varphi$ 
of \eqref{eq:ode} on $[T-\varepsilon,T]$ with $\|\varphi(t)\|_\infty\le 2 (\|q\|_{\infty}  T+\|\psi\|_\infty)$ for all $t\in [T-\varepsilon,T]$. 
We may choose $\varepsilon$ maximal such that $\varepsilon=T$ or $\|\varphi(T-\varepsilon)\|_\infty=2 (\|q\|_{\infty}  T+\|\psi\|_\infty)$.

The function $w:[T-\varepsilon, T]\times \R_0^+\longrightarrow \R$ 
defined by $w(t,c)=\varphi(t)(c)$ is the unique maximal local solution of \eqref{eq:HJBw}.
\end{proof}

\begin{proof}[Proof of Theorem \ref{th:verification}]
Since for every $\varphi\in \cC_b(\R_0^+)$ and $c\in \R_0^+$ the function 
$\xi\mapsto q(\xi) +(\tilde {\cal A}^\xi w)(c)$ is continuous 
on $[0,1]$ by Lemma \ref{lem:power-series}, there exists a maximizer for \eqref{eq:HJB}.
By Lemma \ref{lem:bounding} and Lemma \ref{lem:backward_ode} the assumptions of
Theorem \ref{th:verification} are satisfied on
$[T-\varepsilon,T]\times\state\longrightarrow\R$, where $\varepsilon$ is as in the
proof of Lemma \ref{lem:backward_ode}.  Along the lines of the proof of
\cite[Theorem 8.2.8]{BR2011} it can be shown that
$v:[T-\varepsilon,T]\times\state\longrightarrow\R$ with $v(t,c,x)=u(x)\exp(-\eta w(t,c))$ solves the optimization problem \eqref{eq:value_f} for $t\in[T-\varepsilon,T]$, i.e.~$v(t,\cdot,\cdot)=V(t,\cdot,\cdot)$ for $t\in[T-\varepsilon,T]$.

By Lemma \ref{lem:w-bounded} it holds that
$|w(t,c)|=|-1/\eta\log(v(t,c,x)/u(x))|
=|-1/\eta\log(V(t,c,x)/u(x))|=|W(t,c,x)|\le \|q\|_{\infty}  T+\|\psi\|_\infty$
for $t\in [T-\varepsilon,T]$.
Therefore, $\|w(T-\varepsilon,.)\|_\infty<2( \|q\|_{\infty}  T+\|\psi\|_\infty)$ and hence
$\varepsilon=T$.
Thus $w$ solves \eqref{eq:HJBw} on the whole of $[0,T]\times \R_0^+$ and therefore
 $v$ solves the optimization problem \eqref{eq:value_f} in the whole of
$\stateT$.  
\end{proof}

\subsection{Utility indifference price}
\label{subsec:uip}

With the solution $w$ of \eqref{eq:HJBw} we can compute the utility indifference price $p$ of a derivative with payoff $\psi$. Given the value of the index $c$ and the amount of wealth $x$ at time $t$, the maximum expected utility of terminal wealth can be written as
\begin{align*}
V(t,c,x-p(t,c,x))&=u(x-p(t,c,x))\exp(-\eta w(t,c))=u(x)\exp(\eta p(t,c,x))\exp(-\eta w(t,c))\,.
\end{align*}
The corresponding value with no derivative bought is given by
\begin{align*}
V^0(t,c,x)=u(x)\exp(-\eta w^0(t,c))\,,
\end{align*}
where $w^0$ is the solution of equation \eqref{eq:HJBw} with $\psi\equiv 0$.
Therefore \eqref{eq:buyers-price} simplifies to
\begin{align} \label{eq:indiff_price}
p(t,c,x)=w(t,c)-w^0(t,c)\,.
\end{align}
In particular, $p$ does not depend on $x$ and we omit that parameter from $p$ 
henceforth.
The function $w^0$ does not depend on $c$. So $w^0$ is the solution of an ordinary differential equation.

\begin{lemma}
Let $w^0$ solve \eqref{eq:HJBw} with terminal condition $w^0(T,c)\equiv0$.
Then
\begin{align*}
w^0(t,c)=(T-t)\sup_{\xi\in[0,1]}\left( q(\xi)+\frac{\intens}{\eta} \left(1-G_{A_1}\left( \xi\left(\E\left(\exp\left(\eta Y_{1,1}\right)\right)-1\right)+1 \right) \right)\right)\,.
\end{align*}
\end{lemma}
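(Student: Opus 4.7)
The plan is to guess and verify. Since the terminal condition $\psi \equiv 0$ does not depend on $c$, I would make the ansatz that the solution $w^0(t,c)$ is independent of $c$, i.e.\ $w^0(t,c) = \omega(t)$ for some scalar function $\omega$. Then uniqueness of the maximal local solution established in Lemma \ref{lem:backward_ode} will force the solution of \eqref{eq:HJBw} to coincide with this candidate.

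Under the ansatz, the term $w^0(t,c+Z_1) - w^0(t,c)$ appearing in $\tilde{\cA}^\xi w^0$ vanishes, so
\[
\tilde{\cA}^\xi w^0(t,c) = -\frac{\lambda}{\eta}\bigl( \E(\exp(\eta Z_1^\xi)) - 1 \bigr).
\]
Next, I would compute $\E(\exp(\eta Z_1^\xi))$ explicitly. Using $Z_1^\xi = \sum_{j=1}^{A_1} Y_{1,j}\mathbf{1}_{\{U_{1,j}\le\xi\}}$, conditioning on $A_1 = k$, and exploiting the mutual independence of $(Y_{1,j})$, $(U_{1,j})$ and $A_1$, a product appears; each factor equals
\[
\E\bigl(\exp(\eta Y_{1,j}\mathbf{1}_{\{U_{1,j}\le\xi\}})\bigr) = \xi\,\E(\exp(\eta Y_{1,1})) + (1-\xi) = \xi\bigl(\E(\exp(\eta Y_{1,1}))-1\bigr)+1.
\]
Summing over $k$ against the weights $a_k = \P(A_1=k)$ yields precisely the generating function
\[
\E(\exp(\eta Z_1^\xi)) = G_{A_1}\!\bigl(\xi(\E(\exp(\eta Y_{1,1}))-1)+1\bigr),
\]
which is finite and continuous in $\xi \in [0,1]$ by Assumption \ref{ass:konv-rad} (the argument stays in $[1,\E(\exp(\eta Y_{1,1}))]$, inside the convergence radius of $G_{A_1}$).

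Substituting back, equation \eqref{eq:HJBw} reduces, under the ansatz, to the ordinary differential equation
\[
\omega'(t) + \sup_{\xi\in[0,1]}\!\left( q(\xi) + \frac{\lambda}{\eta}\bigl(1 - G_{A_1}(\xi(\E(\exp(\eta Y_{1,1}))-1)+1)\bigr)\right) = 0, \qquad \omega(T) = 0.
\]
The supremum is a constant (independent of $t$ and $c$), call it $S$, and integrating backward from $T$ gives $\omega(t) = (T-t)S$, which is exactly the formula claimed.

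Finally, I would verify that $(t,c)\mapsto (T-t)S$ is a $\cC_b(\R_0^+)$-valued solution of \eqref{eq:HJBw} with terminal condition zero; it is bounded because $q$ is bounded and the generating function maps $[1,\E(\exp(\eta Y_{1,1}))]$ into a bounded set. By the uniqueness part of Lemma \ref{lem:backward_ode}, this candidate must coincide with $w^0$ on the whole of $[0,T]\times\R_0^+$. The only step that requires care is justifying the interchange of sum and integral in the computation of $\E(\exp(\eta Z_1^\xi))$ and the boundedness of the supremum, both of which are handled by Assumption \ref{ass:konv-rad} exactly as in the proof of Lemma \ref{lem:power-series}; no further obstacle arises.
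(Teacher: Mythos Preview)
Your proposal is correct and follows essentially the same route as the paper: reduce to an ODE via the observation that $w^0$ does not depend on $c$, compute $\E(\exp(\eta Z_1^\xi))$ by conditioning on $A_1$ and recognizing the generating function, then integrate. You are in fact slightly more careful than the paper, which simply asserts ``since $w^0$ does not depend on $c$'' without explicitly framing it as an ansatz validated by the uniqueness in Lemma~\ref{lem:backward_ode}.
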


\begin{proof}
Since $w^0$ does not depend on $c$, the backward equation \eqref{eq:HJBw} becomes
\begin{equation}
\begin{aligned}\label{eq:HJBw0}
w^0_t(t,c)+\sup_{\xi\in[0,1]}\left( q(\xi)+\frac{\intens}{\eta}\left(1-\E\left( \exp\left(\eta Z_1^{\xi}\right)\right)\right)\right)&=0\,,\\
w^0(T,c)&=0\,.
\end{aligned}
\end{equation}
We calculate the expected value in \eqref{eq:HJBw0}:
\begin{align*}
\E&\left(\exp\left(\eta Z_1^{\xi}\right)\right)=
\sum_{k=1}^\infty a_k \E\left(\exp\left(\eta Z_1^{\xi}\right)\Big|A_1=k\right)\\
&= \sum_{k=1}^\infty a_k \E\left(\prod_{j=1}^{k}\left(\xi(\exp\left(\eta Y_{1,1}\right)-1)+1 \right)\right)
=\sum_{k=1}^\infty a_{k} \left( \xi\left(\E\left(\exp\left(\eta Y_{1,1}\right)\right)-1\right)+1 \right)^{k}\\
&=G_{A_1}\left( \xi\left(\E\left(\exp\left(\eta Y_{1,1}\right)\right)-1\right)+1 \right)\,.\end{align*}
Integration yields the claimed solution.
\end{proof}

From \eqref{eq:indiff_price} we can derive a backward equation for $p$. Since $w^0$ does not depend
on the second variable, we have
\begin{align*}
\tilde{\cA}^\xi& w(t,c)
=-\frac{1}{\eta} \intens \left(\exp(\eta w(t,c))\E\left(\exp(-\eta w(t,c+Z_1))\exp\left(\eta Z_1^{\xi}\right)\right)-1\right)\\
&=-\frac{1}{\eta} \intens \left(\exp\big(\eta (w(t,c)-w^0(t,c))\big)\E\left(\exp\big(-\eta (w(t,c+Z_1)-w^0(t,c))\big)\exp\left(\eta Z_1^{\xi}\right)\right)-1\right)\\
&=-\frac{1}{\eta} \intens \left(\exp\big(\eta (w(t,c)-w^0(t,c))\big)\E\left(\exp\big(-\eta (w(t,c+Z_1)-w^0(t,c+Z_1))\big)\exp\left(\eta Z_1^{\xi}\right)\right)-1\right)\\
&=\tilde{\cA}^\xi p(t,c)
\end{align*}
and it holds that $p_t(t,c)=w_t(t,c)-w^0_t(t,c)=w_t(t,c)-\bar w$, where 
\begin{align*}
\bar w=\sup_{\xi\in[0,1]}\left( q(\xi)+\frac{\intens}{\eta} \left(1-G_{A_1}\left( \xi\left(\E\left(\exp\left(\eta Y_{1,1}\right)\right)-1\right)+1 \right) \right)\right)\,.
\end{align*}
Hence
\begin{align}\label{eq:backward-p}
p_t(t,c)
=\bar w-\sup_{\xi\in[0,1]}\left(q(\xi)+\tilde{\cA}^\xi w(t,c)\right)
=\bar w-\sup_{\xi\in[0,1]}\left(q(\xi)+\tilde{\cA}^\xi p(t,c)\right)\,.
\end{align}

\section{Computations}
\label{sec:numerics}

In this section we present a convenient numerical method for computing the expected value in \eqref{eq:backward-p} for the case where the distribution of $Y_{1,1}$ has a smooth density.
Denote by $f^{\ast k}$ the k-fold convolution of a function $f$ with itself, i.e.
$f^{\ast 2}=f\ast f$ and $f^{\ast (k+1)}=f\ast f^{\ast k}$. Let $\hat F$ denote 
the Fourier transform. It holds that $\hat F(f^{\ast k})=\hat F(f)^k$.
The following lemma gives an efficient method for computing $\E(\exp(-\eta w(t,c+Z_1))\exp(\eta Z_1^{\xi}))$.

\begin{lemma}
\label{lemma:fourier}
Assume that the distribution of $Y_{1,1}$ has a piecewise continuous  density
$\mu$. Denote by $\tilde \mu(z)=\exp(\eta z) \mu(z)$. 
Let $\sigma$ be measurable and bounded.

Then it holds that
\begin{align*}
\E \left(\sigma(Z_1)e^{\eta Z_1^{\xi}}\right)=
\int_0^{\infty}\sigma(z)\mu_\xi(z) dz\,,
\end{align*}
where $\mu_\xi=\hat F^{-1}\left(G_{A_1}(\hat F(\xi \tilde \mu+(1-\xi)\mu))\right)$.
\end{lemma}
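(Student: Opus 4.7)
The plan is to reduce the expectation to an integral against a density $\mu_\xi$ by conditioning on $A_1$ and integrating out the uniform selectors $U_{1,j}$. For each fixed $k$, the resulting inner expectation will turn into a $k$-fold convolution of a single density $\nu_\xi := \xi\tilde\mu + (1-\xi)\mu$. Summing over $k$ against the weights $a_k$ should then yield $\mu_\xi = \sum_{k\ge 1} a_k \nu_\xi^{\ast k}$, and Fourier transform will convert this convolution sum into $G_{A_1}(\hat F(\nu_\xi))$, which after inversion is precisely the formula claimed.

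First I would fix $k$ and condition on $A_1 = k$. Using independence of $(Y_{1,j})$ and $(U_{1,j})$ together with $U_{1,j}\sim \cU([0,1])$, the identity $\E(e^{\eta y \mathbf{1}_{\{U\le \xi\}}}) = \xi e^{\eta y} + (1-\xi)$ gives $\E(\sigma(Z_1) e^{\eta Z_1^{\xi}} \mid A_1 = k) = \E(\sigma(\sum_{j=1}^k Y_{1,j}) \prod_{j=1}^k (\xi e^{\eta Y_{1,j}} + (1-\xi)))$. Writing this as an iterated integral against $\prod_j \mu(y_j)\,dy_j$ and recalling that $\tilde\mu(y) = e^{\eta y}\mu(y)$, the integrand factorises as $\prod_j \nu_\xi(y_j)$; by definition of convolution this iterated integral equals $\int_0^\infty \sigma(z)\, \nu_\xi^{\ast k}(z)\, dz$.

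Next I would sum over $k$ weighted by $a_k$. Since $\sigma$ is bounded and $\|\nu_\xi\|_{L^1} = \xi \E(e^{\eta Y_{1,1}}) + (1-\xi) \le \E(e^{\eta Y_{1,1}})$ lies strictly inside the radius of convergence of $G_{A_1}$ by Assumption \ref{ass:konv-rad}, the series $\sum_k a_k \|\nu_\xi^{\ast k}\|_{L^1} = \sum_k a_k \|\nu_\xi\|_{L^1}^k$ is finite. Fubini then justifies exchanging sum and integral, yielding $\E(\sigma(Z_1) e^{\eta Z_1^{\xi}}) = \int_0^\infty \sigma(z)\, \mu_\xi(z)\, dz$ with $\mu_\xi := \sum_{k\ge 1} a_k \nu_\xi^{\ast k}$.

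Finally I would identify this $\mu_\xi$ with the Fourier-inversion expression in the statement. Using $\hat F(\nu_\xi^{\ast k}) = \hat F(\nu_\xi)^k$ together with boundedness of $\hat F \colon L^1 \to L^\infty$, the $L^1$-absolute convergence established in the previous step permits applying $\hat F$ termwise, so that $\hat F(\mu_\xi) = \sum_k a_k \hat F(\nu_\xi)^k = G_{A_1}(\hat F(\nu_\xi))$; inversion then gives the claim. The only delicate point is the justification of this interchange and the termwise Fourier transform, and both follow directly from Assumption \ref{ass:konv-rad}, which keeps everything inside the convergence radius of $G_{A_1}$.
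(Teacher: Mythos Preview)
Your proposal is correct and follows the same overall strategy as the paper: condition on $A_1=k$, identify the conditional expectation as $\int_0^\infty \sigma(z)\,\nu_\xi^{\ast k}(z)\,dz$ with $\nu_\xi=\xi\tilde\mu+(1-\xi)\mu$, sum against $a_k$, and apply the Fourier transform termwise.

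Your route to the convolution form is in fact more direct than the paper's. The paper expands $\prod_{j=1}^k(\xi e^{\eta Y_{1,j}}+(1-\xi))$ by the binomial theorem, obtains terms involving $\mu^{\ast j}$ and $\mu^{\ast(k-j)}$, invokes the identity $\tilde\mu^{\ast j}(z)=e^{\eta z}\mu^{\ast j}(z)$ (proved by induction), and only then re-collects everything into $(\xi\tilde\mu+(1-\xi)\mu)^{\ast k}$ by linearity of convolution. You bypass this detour by observing immediately that $(\xi e^{\eta y}+(1-\xi))\mu(y)=\nu_\xi(y)$ factorwise, so the integrand is already $\sigma(\sum y_j)\prod_j\nu_\xi(y_j)$. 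You are also more explicit than the paper about the Fubini/termwise-Fourier justification via Assumption~\ref{ass:konv-rad}. The one point the paper makes that you pass over is the use of piecewise continuity of $\mu$ (and hence of $\mu_\xi$) to justify the final Fourier inversion almost everywhere; you may want to add a sentence to that effect.
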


\begin{proof}
Denote $\bar Y_{k_1,k_2}:=\sum_{j=k_1}^{k_2} Y_{1,j}$. We have
\begin{align*}
\E &\left(\sigma(Z_1)e^{\eta Z_1^{\xi}}\Big|A_1=k\right)
=\E\left(\sigma(\bar Y_{1,k})\prod_{j=1}^k\left( \xi e^{\eta Y_{1,j}}+(1-\xi)\right)\right)\\
&=\sum_{j=0}^{k}\binom{k}{j}\xi^j(1-\xi)^{k-j}\E(\sigma(\bar Y_{1,k})e^{\eta \bar Y_{1,j}})\\
&=\sum_{j=0}^{k}\binom{k}{j}\xi^j(1-\xi)^{k-j}\int_0^{\infty}\int_0^{\infty}\sigma(
\bar y_{1,j} + \bar y_{j+1,k})e^{\eta \bar y_{1,j}}\mu^{*j}(\bar y_{1,j})\mu^{*(k-j)}(\bar y_{j+1,k})d \bar y_{1,j} d \bar y_{j+1,k}\\
&=\sum_{j=0}^{k}\binom{k}{j}\xi^j(1-\xi)^{k-j}\int_0^{\infty}\sigma(\bar y_{1,k})\int_0^{\infty}e^{\eta (\bar y_{1,k}-\bar y_{j+1,k})}\mu^{*j}(\bar y_{1,k}-\bar y_{j+1,k})\mu^{*(k-j)}(\bar y_{j+1,k})d \bar y_{j+1,k} d \bar y_{1,k}\\
&=\sum_{j=0}^{k}\binom{k}{j}\xi^j(1-\xi)^{k-j}\int_0^{\infty}\sigma(\bar y_{1,k})\int_0^{\infty}\tilde \mu^{*j}(\bar y_{1,k}-\bar y_{j+1,k})\mu^{*(k-j)}(\bar y_{j+1,k}) d \bar y_{j+1,k} d \bar y_{1,k}\,,
\end{align*}
where we used that $\tilde \mu^{\ast k}(z)=\exp(\eta z) \mu^{\ast k}(z)$,
which can be seen by induction. The first claim now follows by linearity of the
convolution. Further,
\begin{align*}
\E \left(\sigma(Z_1)e^{\eta Z_1^{\xi}}\right)
&=\sum_{k=1}^{\infty}a_k\E\left(\sigma(Z_1)e^{\eta Z_1^{\xi}}\Big|A_1=k\right)
=\sum_{k=1}^{\infty}a_k\int_0^{\infty}\sigma(z)\left(\xi \tilde \mu+(1-\xi)\mu\right)^{\ast k}(z) dz\\
&=\int_0^{\infty}\sigma(z)\sum_{k=1}^{\infty}a_k\left(\xi \tilde \mu+(1-\xi)\mu\right)^{\ast k}(z) dz\,.
\end{align*}
With this,
\begin{align*}
\hat F\left(\sum_{k=1}^{\infty}a_k\left(\xi \tilde \mu+(1-\xi)\mu\right)^{\ast k}\right)
&=\sum_{k=1}^{\infty}a_k\hat F\left(\left(\xi \tilde \mu+(1-\xi)\mu\right)^{\ast  k}\right)\\
=\sum_{k=1}^{\infty}a_k\left(\hat F\left(\xi \tilde \mu+(1-\xi)\mu\right)\right)^{k}
&=G_{A_1}\left(\hat F(\xi \tilde \mu+(1-\xi)\mu)\right)\,.
\end{align*}

Since $(\xi \tilde \mu+(1-\xi)\mu)$ is piecewise continuous,
$(\xi \tilde \mu+(1-\xi)\mu))^{*k}$ is continuous for $k\ge 2$ and hence
$\sum_{k=1} a_k\left(\xi \tilde
\mu+(1-\xi)\mu\right)^{*k}$ is piecewise continuous.

Thus $\sum_{k=1}^\infty a_k\left(\xi \tilde
\mu+(1-\xi)\mu\right)^{*k}=\hat F^{-1}\left(\hat F\left(\sum_{k=1}^\infty a_k\left(\xi \tilde
\mu+(1-\xi)\mu\right)^{*k}\right)\right)$ a.e.~on $\R_0^+$.
\end{proof}

\subsection{Numerical experiments}
\label{subsec:exp}

Our aim is to price a CAT (spread) option, i.e.
$\psi(C_T)=\max(0,\min(C_T-K,L-K))$.

The function $q$ governing the company's market share is chosen
as in \cite{LeoNgare}.
We assume that there are
$M$ clients in the market who potentially contribute to the claims
process. Let $a$ be the \emph{fair} annual premium 
for one client, i.e.~$\E(C_1)=M a$.
The annual premium for one contract
therefore has to be
greater or equal than $a$, since otherwise the insurance company will make an
almost sure loss in the long run.
The premium the insurance company charges for a claim is $a(1+\theta)$ with $\theta>0$.
Furthermore, the company faces an exogenously given demand curve $d$ for insurance.
It is continuous, decreasing in $\theta$, and satisfies
$d(\theta)=M$ for $\theta\le0$ and 
$d(\theta)=0$ for $\theta\ge m$, i.e.~the company gets 
to insure the whole risk, if it does not charge any risk loading ($\theta=0$) and it gets $0$ contracts, if
the risk-loading exceeds some fixed number $m>0$. 
With this $q(\xi)=\xi a (1+\theta(\xi))$, where $\theta(\xi)=d^{-1}(\xi M)$.
For our numerical example we choose
\[
d(\theta) =\begin{cases}
M & \theta \le 0 \\
M(1-\theta/m) &  0<\theta<m\\
0 &  \theta \ge m\,.
\end{cases}
\]
The model parameters in our example are given
by  $M=10^4$, $m=2$, $K=10^7$,  $\eta=10^{-6}$, $T=1$ year. 
The number of jumps in case of a catastrophe is Poisson distributed, $(\tilde A_1-2) \sim \text{Poisson}(40)$; the distribution of $A_1$ then follows from \eqref{eq:distrA}.
The claim size distribution is a Gamma distribution, $Y_{1,1}\sim \text{Gamma}(10,5000)$.\\

We want to study two effects: the effect that holding a derivative has on the risk loading (which depends on the optimal market share $\xi^\ast$) and its change over time, and the effect of our model (the \emph{clustered claims (CC) model}) on the utility indifference price of the derivative and on the risk loading in comparison to the model where the claims process is a simple compound Poisson process as, e.g., in \cite{LeoNgare} (the \emph{single claim (SC) model}).

For the SC-model we choose $\lambda_1=100, \lambda_2=0$.
In order to be able to compare the two models we adapt $\lambda_1,\lambda_2$ such that the expected annual claim size per contract $a$ stays constant, yielding $\lambda_1=69, \lambda_2=1$.\\

Figure \ref{fig:bidprice} shows the utility indifference price $p$ of the CAT spread option in dependence of the value $c$ of the claims process.
We see that the price increases in $c$.
As time increases the expected number of claims within the remaining time decreases, and hence also the price decreases;
for $t\to T$ the prices converges to the payoff.
Further, we observe that in the SC-model the price is always lower than in the
CC-model, since the latter more accurately accounts for a clustering of claims.\\

\begin{figure}
\centering
\includegraphics[height=.25\textwidth]{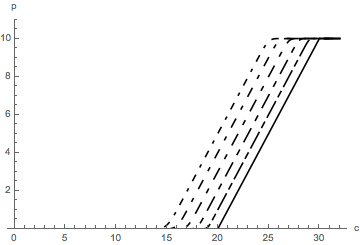}
\includegraphics[height=.25\textwidth]{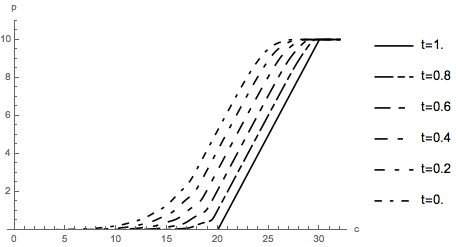}
\caption{Utility indifference prices $p$  in the SC-model (left) and in the CC-model (right). (Units on the axes are $10^6$ units of currency.)}
\label{fig:bidprice}
\end{figure}

Figure \ref{fig:control} shows the risk loading corresponding to the optimal market share $\xi^\ast$ in dependence of the value $c$ of the claims process.
For small $c$ the risk loading is the same as in the case of no derivative held,
since the probability that the derivative has a positive payoff is small.
For $c>L$ a further increase of $c$ does not change the payoff and hence the situation is the same as for holding no derivative.

As time increases the probability that the payoff of the derivative grows in $c$ and hence compensates losses during the remaining time decreases, but also the expected number of claims before $T$ decreases.
An interesting observation is that for small $c$ the first effect dominates and hence the risk loading increases,
whereas for large $c$ the latter effect dominates and hence the risk loading decreases.

In the CC-model the risk loading is in general higher than in the SC-model, since we imposed risk aversion.
The risk loading decreases significantly when a derivative is bought.
The effect of holding a derivative is higher in the CC-model; the optimal average risk loading decreases by approximately $31.9\%$ (compared to $2.7\%$ in the SC-model).\\

\begin{figure}
\centering
\includegraphics[height=.25\textwidth]{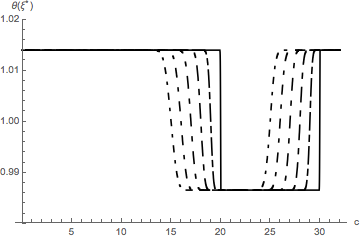}
\includegraphics[height=.25\textwidth]{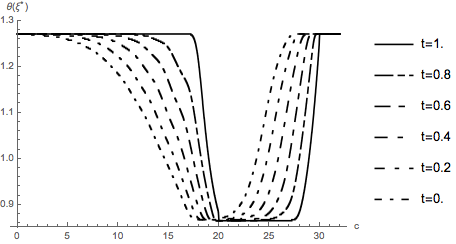}
\caption{Risk loading for the optimal market share $\xi^\ast$  in the SC-model (left) and in the CC-model (right).
(Units on the $c$ axis are $10^6$ units of currency.)}
\label{fig:control}
\end{figure}

We observe the same effects when comparing our CC-model to the SC-model for the example of \cite{LeoNgare}, where a bounded claim size distribution was used.

\paragraph{Concluding remarks}
The introduction of a derivative serves as an effective alternative to classical reinsurance and leads to significantly smaller insurance premia.
The CC-model introduced in this paper has a significant impact on the price of a CAT derivative as well as on the optimal average risk loading. It reflects catastrophes more accurately.

\subsection{Risk management}

In this section we compute the profit-loss distribution and the residual risk of an insurance company holding a CAT derivative.
The former is useful for the derivation of coherent risk measures, the latter quantifies the efficiency of the hedge.

\paragraph{Profit-loss distribution}
The profit-loss distribution is the distribution of the (optimally controlled) wealth $\rho:=X_T^{\xi^\ast}+\psi(C_T)-p$ in case the company holds a CAT derivative.\\

Let $\xi^\ast$ be the optimal control. For $\varsigma\in\mathbb{C}$ define
\begin{align}
\label{eq:laplacevalue}
V_{\varsigma}(t,x,c)=\E\left(\exp\left(-\varsigma\left(X_T^{\xi^\ast}+\psi(C_T)-p\right)\right)\Big|C_t=c,X^{\xi^\ast}_t=x\right)\,.
\end{align}
The right-hand side in \eqref{eq:laplacevalue} can be interpreted as (two-sided) Laplace transform $\hat{L}$ of the profit-loss distribution at $\varsigma$,
$\hat{L}(\varsigma)=\E\left(\exp(-\varsigma \rho)|C_t=c,X^{\xi^\ast}_t=x\right)$.
We can compute $V_{\varsigma}$ by making the ansatz $V_{\varsigma}(t,x,c)=u(x)e^{-\varsigma W_{\varsigma}(t,c)}$, where $W_\varsigma$ solves the backward equation
\begin{equation}\label{eq:bw-w-alpha}
\begin{aligned}
\frac{\partial w_{\varsigma}}{\partial t}(t,c)+q(\xi^\ast)+\tilde \cA^{\xi^\ast}_\varsigma w(t,c)&=0\,,\\
w_{\varsigma}(T,c)&=\psi(c)\,,
\end{aligned}
\end{equation}
with corresponding $\tilde \cA^{\xi^\ast}_\varsigma$.
By solving \eqref{eq:bw-w-alpha} for different values of $\varsigma$, we get the
density $\nu$ of the profit-loss distribution by inverting the Laplace
transform: 
\begin{align*}
\nu(\rho)=\frac{e^{c \rho}}{\pi}\int_{0}^{\infty} \re(\hat L(c+i u))\cos(\rho u)-\im(\hat L(c+i u))\sin(\rho u) du\,.
\end{align*}

\paragraph{Residual risk}
The numerical experiments in Section \ref{subsec:exp} showed that the optimal market share $\xi^\ast$
of an insurance company that holds a CAT derivative is higher than for a company that does not ($\xi^0$).
The change in the strategy $\xi^\ast-\xi^0$ when an insurance company buys a CAT derivative,
is also the strategy used for hedging the derivative itself.

The (buyer's) \textit{risk of derivative} is $\psi(C_T)-p$; the
\textit{residual risk}, i.e.~the remaining risk after hedging is given by 
\begin{align}\label{eq:resrisk}
\psi(C_T)-p+X_T^{\xi^\ast}-X_T^{\xi^0}\,.
\end{align}
The density of \eqref{eq:resrisk} can be computed in the same way as the density of the profit-loss distribution above.


\section*{Acknowledgements}

A. Eichler is supported by the Austrian Science Fund (FWF): Project P21196.

G. Leobacher is supported by the Austrian Science Fund (FWF): Project F5508-N26, which is part of the Special Research Program "Quasi-Monte Carlo Methods: Theory and Applications" and by the Austrian Science Fund (FWF): Project P21196.
This paper was written while G.~Leobacher was member of the Department of Financial Methematics and Applied Number Theory, Johannes Kepler University Linz, Altenbergerstra\ss{}e 69, 4040 Linz, Austria.

M. Sz\"olgyenyi is supported by the Vienna Science and Technology Fund (WWTF): Project MA14-031.



\vspace{2em}
\centerline{\underline{\hspace*{17.5cm}}}

\noindent Andreas Eichler \\
University of Applied Sciences Upper Austria -- Campus Wels, Stelzhamerstra\ss{}e 23, 4600 Wels, Austria\\
andreas.eichler@fh-wels.at\\

\noindent Gunther Leobacher \\
Department of Mathematics and Scientific Computing, University of Graz, Heinrichstra\ss{}e 36, 8010 Graz, Austria\\
gunther.leobacher@jku.at\\

\noindent Michaela Sz\"olgyenyi \Letter \\
Institute for Statistics and Mathematics, WU Vienna University of Economics and Business, Welthandelsplatz 1, 1020 Vienna, Austria\\
michaela.szoelgyenyi@wu.ac.at


\end{document}